\documentclass[journal,10pt,letterpaper,romanappendices]{IEEEtran}

\usepackage{amsmath,amssymb,amsfonts,amsthm}
\usepackage{algorithmic}
\usepackage{algorithm}
\usepackage{array}
\usepackage{textcomp}
\usepackage{stfloats}
\usepackage{url}
\usepackage{verbatim}
\usepackage{graphicx}
\usepackage{subcaption}
\usepackage{cite}
\usepackage{xcolor}
\usepackage{braket}
\usepackage[nolinks=true,bookmarks=false]{hyperref}
\usepackage[T1]{fontenc}
\usepackage{gensymb}
\usepackage{orcidlink}
\usepackage{braket}
\usepackage[utf8]{inputenc}

\def\BibTeX{{\rm B\kern-.05em{\sc i\kern-.025em b}\kern-.08em
    T\kern-.1667em\lower.7ex\hbox{E}\kern-.125emX}}

\newcommand{\nb}{\bar{n}_\mathrm{B}}
\newcommand{\im}{j}
\newcommand{\dif}{\mathop{}\!\mathrm{d}}

\DeclareMathOperator{\tr}{tr}

\theoremstyle{definition}
\newtheorem{definition}{Definition}

\newtheorem{theorem}{Theorem}
\newtheorem{lemma}{Lemma}

\allowdisplaybreaks

\begin{document}

\title{Covert Signaling for Communication and Sensing over the Bosonic Channels}

\author{Tianrui Tan\orcidlink{0009-0005-4473-7586}, Evan J.~D.~Anderson\orcidlink{0000-0002-0894-1695}, Michael S.~Bullock\orcidlink{0000-0002-3528-7473}, and Boulat A.~Bash\orcidlink{0000-0002-1205-3906}
\thanks{Tianrui Tan is with the Dept.~of Electrical and Computer Engineering, University of Arizona, Tucson, AZ, USA (email: tantianrui@arizona.edu)}
\thanks{Evan Anderson is with the Dept.~of Electrical and Computer Engineering, University of Maryland,
College Park, MD, USA
(email: ejdanderson@arizona.edu)}
\thanks{Michael S.~Bullock is with the Manning College of Information and Computer Sciences and the Dept. of Electrical and Computer Engineering, USA, University of Massachusetts, Amherst, MA, USA (email: mbullock@umass.edu)}
\thanks{Boulat Bash is with the Dept.~of Electrical and Computer Engineering, and the Wyant College of Optical Sciences, University of Arizona, Tucson, AZ, USA (email: boulat@arizona.edu)}}

\maketitle

\begin{abstract}
Preventing signal detection in communication and  active sensing requires careful control of transmission power.
In fact, the square-root laws (SRL) for covert classical and quantum communication and sensing prescribe that the average output energy per channel use scales as $1/\sqrt{n}$ for $n$ channel uses.
\emph{Diffuse} and \emph{sparse} signaling achieve this. The former transmits signals whose energy decays as $1/\sqrt{n}$ over all $n$ channel uses, which is convenient for mathematical analysis.
The latter transmits constant-energy signals only approximately $\propto\sqrt{n}$ times out of $n$ channel uses, remaining silent on the others. 
This offers significant practical advantages in compatibility with modern digital transmitters. 
Here, we study sparse signaling over the lossy thermal-noise bosonic channel, which is a quantum model of many practical channels (including optical, microwave, and radio-frequency).
We characterize the input signal state that minimizes  detectability.  We find an unintuitive optimal quantum state structure: a mixture of just two consecutive photon-number states. In particular, in the low-brightness regime, the optimal signal state is a mixture of
vacuum and a single photon.
Since these states are generally suboptimal for both communication and active sensing, we explore the resulting trade-off and identify input-power thresholds for transitions between optimizing covertness and performance in communication and sensing tasks. 
\end{abstract}

\section{Introduction}
\label{sec:introduction}
Covert, or low-probability-of-detection/intercept (LPD/LPI) signaling protects transmissions from adversarial detection. Although spread spectrum has provided covertness since the 1940s \cite{simon94ssh}, investigation of its fundamental limits began fairly recently. Initial information-theory study of covert communication over classical additive white Gaussian noise (AWGN) channels \cite{bash12sqrtlawisit, bash13squarerootjsac} was quickly followed up by work on quantum bosonic channels \cite{bash13quantumlpdisit, bash15covertbosoniccomm, bullock20discretemod, gagatsos20codingcovcomm}. AWGN and bosonic channels are, respectively, classical and quantum descriptions of practical channels such as optical, microwave, and radio-frequency (RF). This progress motivated the study of fundamental limits of classical \cite{goeckel17sensinglinsystems-asilomar} and quantum \cite{bash17qcovertsensingisit, gagatsos19floodlightsensor, tahmasbi20sensing, tahmasbi21signalingcovert} covert active sensing. 

A fundamental feature of covert communication and sensing is the square-root law (SRL), which limits the energy per channel use that transmitter Alice can emit without detection to be $\propto 1/\sqrt{n}$ over a blocklength of $n$ channel uses.  This limits her to transmitting only $\mathcal{O}(\sqrt{n})$ covert bits reliably over that blocklength; attempting to transmit more either makes her transmission detectable or prevents reliable decoding at Bob \cite{bash12sqrtlawisit, bash13squarerootjsac}. The SRL also holds for bosonic channels without entanglement assistance \cite{bullock20discretemod, gagatsos20codingcovcomm}.
In sensing, the SRL manifests through $\mathcal{O}(1/\sqrt{n})$ rather than $\mathcal{O}(1/n)$ scaling in  the estimator mean square error decay \cite{goeckel17sensinglinsystems-asilomar, bash17qcovertsensingisit, gagatsos19floodlightsensor} and $\mathcal{O}(\sqrt{n})$ rather than $\mathcal{O}(n)$ detection error exponent \cite{tahmasbi20sensing, tahmasbi21signalingcovert}.
Although the Shannon and Holevo capacities in bits per channel use are zero for a covert channel, a non-trivial amount of information can still be transmitted. This is fully characterized and adapted to classical discrete memoryless channels \cite{bloch15covert, wang15covert, tahmasbi16covertdmc2ndorder} and their quantum counterparts \cite{azadeh16quantumcovert-isitarxiv, wang16cq-srlconverse,  bullock2025fundamentallimitscovertcommunication}, with much other work following. Furthermore, covert networks \cite{soltani18netlpd, arumugam16broadcast, arumugam18mac, tan18covertbc, azadeh18covertmultihop, cho19covertscaling}, coding strategies \cite{zhang16covertcodes, bloch17ppm, wang21covcodes}, asynchrony \cite{bash16timing, arumugam16async, goeckel16timing}, jammer assistance \cite{lee14posratecovert, sobers17jammer} and many other topics have since been investigated. 
A tutorial \cite{bash15covertcommmag} and a comprehensive survey \cite{chen23covcommssurvey} are available.

We explore the signaling necessary to achieve the SRL.
The energy budget for covert transmission over $n$ channel uses scales as $\mathcal{O}(\sqrt{n})$ for classical AWGN \cite{bash12sqrtlawisit, bash13squarerootjsac} and quantum bosonic  \cite{bash13quantumlpdisit, bash15covertbosoniccomm, bullock20discretemod, gagatsos20codingcovcomm} channels. There are two methods for allocating this energy: \emph{diffuse signaling} spreads it over all $n$ channel uses. Thus, energy per channel use decays $\propto 1/\sqrt{n}$. While this is mathematically convenient, e.g., in enabling Taylor series analysis, it is impractical, especially since analog-to-digital and digital-to-analog converters (ADCs/DACs) are limited by finite resolution. This forces the use of an alternative method: \emph{sparse signaling}, which allocates a fixed amount of energy to each of $\propto\sqrt{n}$ randomly selected channel uses. 
The central question addressed here is: what is the structure of the optimal quantum state for achieving covertness via sparse signaling, and what impact does it have on the performance of communication and sensing systems that use it?

Indeed, this paper is motivated by recent experiments in RF that must employ sparse signaling \cite{bali2025experimental-milcom, bali2026experimental}, as well as previous studies of covert signaling for active sensing \cite{tahmasbi20sensing, tahmasbi21signalingcovert}.
Surprisingly, we find that the optimal input state for sparse signaling is a mixture of at most two consecutive Fock (photon-number) states. In the low-brightness regime, this reduces to a vacuum single-photon mixture and attains the global optimum by matching the bound in \cite[Th.~1]{bullock20discretemod}. This presents a trade-off between optimizing the input for covertness vs.~performance in communication and sensing tasks, which we explore.

The rest of the paper is organized as follows.  Section~\ref{sec:preliminaries} introduces the
bosonic channel model, the communication and sensing settings, the covertness criterion, and the sparse-signaling model. Section~\ref{sec:primary-results} derives the optimal sparse signaling state structure.  Section~\ref{sec:covert_regime} explores its communication and sensing performance, and Section~\ref{sec:conclusion} concludes.

\section{Preliminaries}
\label{sec:preliminaries}

\begin{figure}[htb]
\centering
\includegraphics[width=0.33\columnwidth]{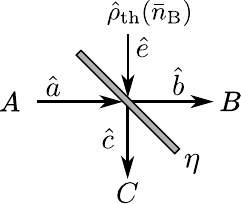}
\caption{The lossy thermal-noise bosonic channel $\mathcal{E}_{A\to BC}^{\eta,\nb}$ is characterized by transmissivity $\eta$ and thermal photon number $\bar{n}_{\text{B}}$, with input subsystem $A$ and output subsystems $B$ and $C$. Environment is in a thermal state $\hat{\rho}_{\rm th}(\nb)$. The modal annihilation operators $\hat{a}$, $\hat{b}$, $\hat{c}$ and $\hat{e}$ describe the channel input-output relationships $\hat{b}=\sqrt{\eta}\hat{a}+\sqrt{1-\eta}\hat{e}$ and $\hat{c}=\sqrt{1-\eta}\hat{a}-\sqrt{\eta}\hat{e}$. 
\label{fig:channel-model}}
\end{figure}

\subsection{System Model}
\label{subsec:systemmodel} 
We briefly review the necessary bosonic systems and quantum optics background, deferring the details to a survey \cite{weedbrook12gaussianQIrmp} and textbooks \cite{scully97quantumoptics, orszag16quantumotpics}.
We employ a lossy thermal-noise bosonic channel modeled by a beam splitter with the environment
mean photon number $\bar{n}_\text{B}$ and transmittance $\eta$ denoted by a linear, completely positive, and trace-preserving (CPTP) map $\mathcal{E}_{A\to BC}^{\eta,\bar{n}_\text{B}}$. Fig.~\ref{fig:channel-model} depicts it. An electromagnetic field mode at the transmitter's center frequency is the fundamental unit of information transmission in our work, akin to channel use in classical information theory.  System $A$ describes the input mode, while system $E$ is the input from the environment in a thermal state $\hat{\rho}_{\text{th}}\left(\bar{n}_\text{B}\right)$. Systems $B$ and $C$ describe the outputs. One of the output modes is to the intended receiver, while the other is signal loss  to the environment and/or the adversary. 

For a single bosonic mode, the Fock (or photon-number) state  $\ket{i}$ describes the state containing exactly $i$ photons. It is an eigenbasis of the photon-number operator $\hat N=\hat a^\dagger \hat a$, i.e., $\hat N \ket{i}= i \ket{i}$,
where $\hat a$ and $\hat a^\dagger$ are the annihilation and creation operators. The set $\{\ket{i}\}_{i\ge 0}$ forms an orthonormal basis for the Hilbert space of the single bosonic mode, with
$\sum_{i=0}^\infty \ket{i}\bra{i}=\mathbb{I}$ \cite{weedbrook12gaussianQIrmp, scully97quantumoptics, orszag16quantumotpics}
Any quantum state of a bosonic mode can therefore be represented in the Fock basis: $
\hat{\rho}=\sum_{i,i^\prime=0}^{\infty}\rho_{ii^\prime}\ket{i}\bra{i^\prime}$, where the diagonal elements $\rho_{ii}$ specify the photon-number distribution, and the off-diagonal elements $\rho_{i,i^\prime\neq i}$ encode phase relations.
For example, the thermal state describing the environment of $\mathcal{E}_{A\to BC}^{\eta,\bar{n}_\text{B}}$ is $\hat{\rho}_{\text{th}}\left(\bar{n}_\text{B}\right)=\sum_{i=0}^\infty\lambda_i\ket{i}\bra{i}$, with $\lambda_i=\frac{(\bar{n}_B)^i}{(1+\bar{n}_B)^{i+1}}$. A vacuum state $\ket{0}\bra{0}=\hat{\rho}_{\text{th}}\left(0\right)$   \cite{weedbrook12gaussianQIrmp, scully97quantumoptics, orszag16quantumotpics}.
We use $n$ orthogonal modes, i.e., Alice transmits an $n$-mode  state $\hat{\rho}_A^{n}$ in both covert communication and active sensing.

\begin{figure*}[t]
    \centering
    \begin{subfigure}[t]{0.52\textwidth}
        \centering
        \includegraphics[width=\linewidth]{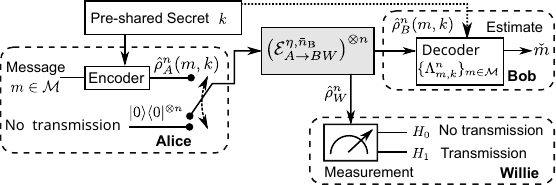}
        \caption{Covert communication}
        \label{fig:system-comms}
    \end{subfigure}
    \hfill
    \begin{subfigure}[t]{0.42\textwidth}
        \centering
        \includegraphics[width=\linewidth]{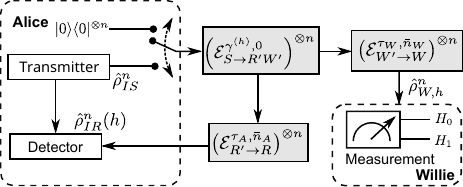}
        \caption{Covert sensing}
        \label{fig:system-sensing}
    \end{subfigure}
    \caption{Systems models for covert communication and sensing described in Section \ref{subsec:systemmodel}. In both models, the adversary, Willie, must decide whether Alice transmits based on his observed state. When she does not transmit, he receives the thermal state.}
    \label{fig:two-column-wide}
\end{figure*}
\subsubsection{Communication Model}
\label{subsubsec:commsmodel}

As in our earlier works \cite{bash13quantumlpdisit, bash15covertbosoniccomm, bullock20discretemod, gagatsos20codingcovcomm}, we model communication by a single lossy thermal-noise bosonic channel $\mathcal{E}^{\eta,\bar{n}_{\text{B}}}_{A\to BW}$ depicted in Fig.~\ref{fig:system-comms}. 
Alice and Bob share a random classical secret key $k\in\mathcal{K}$  that is inaccessible to the adversary Willie; we defer the characterization of its size to future work. 
Alice encodes an equiprobable message $m\in\mathcal{M}$  in an $n$-mode state $\hat{\rho}_{A}^{n}(m,k)$. An optimal random code outputs a tensor product of $n$ independent and identically distributed (i.i.d.) states, thus $\hat{\rho}_{A}^{n}(m,k)=\bigotimes_{i=1}^n\hat{\rho}_{A,i}(m,k)$ \cite{giovannetti_ultimate_2014}.
 These are transmitted to Bob via $n$ independent uses of $\mathcal{E}^{\eta,\bar{n}_{\text{B}}}_{A\to BW}$, who receives the $n$-mode output state $\hat{\rho}_{B}^{n}(m,k)=\tr_{W}\left[\left(\mathcal{E}_{A\to BW}^{\eta,\bar n_B}\right)^{\otimes n}\left(
\hat{\rho}_{A}^{n}(m,k)\right)\right]$, and uses $k$ to 
estimate $m$ with vanishing error probability \cite{gagatsos20codingcovcomm}. 

\subsubsection{Sensing Model}
\label{subsubsec:sensingmodel}

We study the covert version  \cite{tahmasbi20sensing, tahmasbi21signalingcovert} of quantum-illumination-based binary target detection  \cite{tan08qigaussianstates}, depicted in Fig.~\ref{fig:system-sensing}. 
Alice prepares an $n$-mode signal-reference state $\hat{\rho}_{IS}^{n}$ and retains the reference (``idler'') modes $I$.
As in \cite{tahmasbi20sensing, tahmasbi21signalingcovert, tan08qigaussianstates}, we employ product states: $\hat{\rho}_{IS}^{n}=\hat{\rho}_{IS}^{\otimes n}$.
Signal modes $S$ interrogate the target, modeled by a pure-loss bosonic channel $\mathcal{E}^{\gamma^{(h)},0}_{S\to R^\prime W^\prime}$, where transmittance to Willie $\gamma^{(h)}$ depends on whether the target is absent ($h=0$,  $\gamma^{(0)}=1$) or present ($h=1$, $\gamma^{(1)}<1$).
Lossy thermal-noise bosonic channels $\mathcal{E}^{\tau_A,\bar{n}_A}_{R^\prime\to R}$ and $\mathcal{E}^{\tau_W,\bar{n}_W}_{W^\prime\to W}$ corrupt the light that the target reflects to Alice (modes $R^\prime$) and transmits to Willie (modes $W^\prime$), respectively.
Alice detects the target by a hypothesis test on the reference and returned modes $I$ and $R$ in the state
$\hat{\rho}_{IR}^{n}(h)=
\left(\mathcal{I}_{I\to I}\otimes\mathcal{E}^{(1-\gamma^{(h)})\tau_A,(1-\tau_A)\bar n_A}_{S\to R}\right)^{\otimes n}\left(\hat{\rho}_{IS}^n\right)$,
where $\mathcal{I}_{I\to I}$ is an identity channel and 
$
    \mathcal{E}^{(1-\gamma^{(h)})\tau_A,(1-\tau_A)\bar n_A}_{S\to R}=\mathcal{E}^{\tau_A,\bar n_A}_{R'\to R}
    \left(
        \tr_{W'}
        \left[
            \mathcal{E}^{\gamma^{(h)},0}_{S\to R'W'}(\hat\rho_S)
        \right]
    \right)$.
Since probes are not returned when target is absent, $\hat{\rho}_{IR}^{n}(0)=\tr_{S}\left[\hat{\rho}_{I^nS^n}\right]
\otimes
\hat{\rho}_{\mathrm{th}}\left((1-\tau_A)\bar n_A\right)^{\otimes n}$.
For equal prior probabilities $\Pr(H_0 \text{ true})=\Pr(H_1 \text{ true})=1/2$  of hypotheses $H_0$ and $H_1$ that the target is absent and present, Alice's detection error
probability $P_{ e}^{(s)}
=\frac{P_{\text{FA}}+P_{\text{MD}}}{2}$, where $P_{\text{FA}}=
\Pr\left(\text{choose } H_1 \mid H_0 \text{ true}\right)$, and $P_{\text{MD}}
=
\Pr\left(\text{choose } H_0 \mid H_1 \text{ true}\right)$.
Helstrom limit \cite{helstrom76quantumdetect}~\cite[Sec.~9.1.4]{wilde16quantumit2ed} yields
\begin{align}
P_{\mathrm e}^{(\mathrm s)}
\geq
\frac{1}{2}
\left(
1-\frac{1}{2}
\left\|
\hat{\rho}_{IR}^{n}(1)
-
\hat{\rho}_{IR}^{n}(0)
\right\|_1
\right),
\label{eq:helstrom_limit_sensing}
\end{align}
where $\left\|\hat A\right\|_1=\tr\left[\sqrt{\hat A^\dagger \hat A}\right]$ is the trace norm.

In both communication and active sensing, Alice’s objective is to transmit while remaining undetected by the adversary Willie. This motivates the covertness criteria introduced next.

\subsection{Covertness criteria}
\label{subsec:covertness}

Covertness quantifies Willie’s ability to distinguish whether Alice is transmitting or not.
When silent, she inputs innocent vacuum state $\hat{\rho}^n_{A}=\ket{0}\bra{0}^{\otimes n}$,
which yields an $n$-mode thermal state at Willie,
$\hat{\rho}_{\text{th}}^{\otimes n}(\eta \bar n_B)$ for communication and $\hat{\rho}_{\text{th}}^{\otimes n}\left(\left(1-\tau_W\right)\bar n_W\right)$ for sensing. 
When Alice transmits in the communication setting,
the $n$-mode state $\hat{\rho}_{W}^{n}=\frac{1}{|\mathcal{M}|}\frac{1}{|\mathcal{K}|}\sum_{m,k}\tr_B\left[\left[\mathcal{E}^{\eta,\bar{n}_{\text{B}}}_{A\to BW}\right]^{\otimes n}\left(\hat{\rho}_{A}^{n}(m,k)\right)\right]$ is Willie's output of channel $\mathcal{E}^{\eta,\bar{n}_{\text{B}}}_{A\to BW}$ averaged over messages in $\mathcal{M}$ and the secret  from $\mathcal{K}$ inaccessible by Willie. 
As in \cite{bullock20discretemod, gagatsos20codingcovcomm}, we assume $\mathcal{K}$ is a random code, implying that $\hat{\rho}_{W}^{n}=\hat{\rho}_{W}^{\otimes n}$.
In the sensing setting, Willie’s received state depends on the target: $\hat{\rho}_{W,h}^{n}=\left(\mathcal{E}^{\tau_W,\bar n_W}_{W'\to W}\right)^{\otimes n}
\left(
\tr_{IR^\prime}
\left[
\left(
\mathcal{I}_{I\to I}
\otimes
\mathcal{E}^{\gamma^{(h)},0}_{S\to R^\prime W^\prime}
\right)^{\otimes n}
\left(\hat{\rho}_{IS}^{n}\right)
\right]
\right)$, where $h\in\{0,1\}$ indicates target's presence.
Since here $\hat{\rho}_{IS}^{n}=\hat{\rho}_{IS}^{\otimes n}$, $\hat{\rho}_{W,h}^{n}=\hat{\rho}_{W,h}^{\otimes n}$.

As in \cite{bash12sqrtlawisit, bash13squarerootjsac, bash13quantumlpdisit, bash15covertbosoniccomm, bullock20discretemod, gagatsos20codingcovcomm, goeckel17sensinglinsystems-asilomar, bash17qcovertsensingisit, gagatsos19floodlightsensor, tahmasbi20sensing, tahmasbi21signalingcovert}, we call a system covert if Willie’s average detection error probability $P_{\rm e}^{\text{(w)}}\geq\frac{1}{2}-\delta$ for small $\delta>0$. Like $P_{\rm e}^{\text{(s)}}$ in \eqref{eq:helstrom_limit_sensing}, for communication it is:
\begin{align}
    P_{\rm e}^{\text{(w)}}&\geq \frac{1}{2}
\left(
1
-
\frac{1}{2}
\left\|
\hat{\rho}_{W}^{\otimes n}
-
\hat{\rho}_{\mathrm{th}}^{\otimes n}(\eta \bar n_B)
\right\|_1
\right),\label{eq:pe_willie_comms}
\end{align}
whereas for sensing it is: 
\begin{align}
    P_{\rm e}^{\text{(w)}}&\geq \frac{1}{2}
\left(
1
-
\max_{h\in\{0,1\}}\frac{1}{2}
\left\|
\hat{\rho}_{W,h}^{\otimes n}
-
\hat{\rho}_{\mathrm{th}}^{\otimes n}\left(\left(1-\tau_W\right)\bar n_W\right)
\right\|_1
\right).\label{eq:pe_willie_sensing}
\end{align}
Numerics suggest that $h=0$ minimizes the lower bound in \eqref{eq:pe_willie_sensing}; we defer analytical confirmation to future work.
The trace distance is mathematically
inconvenient; however, quantum Pinsker’s inequality \cite[Th.~11.9.1]{wilde16quantumit2ed}, $\frac{1}{2}
\left\|
\hat{\rho}
-
\hat{\sigma}
\right\|_1
\leq
\sqrt{
\frac{1}{2}
D\left(
\hat{\rho}
\middle\|
\hat{\sigma}
\right)
}$, upper bounds it via the quantum relative entropy (QRE) $D(\hat{\rho}\|\hat{\sigma})=\tr[\hat{\rho}\log\hat{\rho}-\hat{\rho}\log\hat{\sigma}]$. 
Thus, as in \cite{bash12sqrtlawisit, bash13squarerootjsac, bash13quantumlpdisit, bash15covertbosoniccomm, bullock20discretemod, gagatsos20codingcovcomm, goeckel17sensinglinsystems-asilomar, bash17qcovertsensingisit, gagatsos19floodlightsensor, tahmasbi20sensing, tahmasbi21signalingcovert}, we use it to define our covertness constraint.
Since QRE is additive for product states, the communication and sensing systems are covert if:
\begin{align}
    nD\left(\hat{\rho}_{W} \middle\|
\hat{\rho}_{\mathrm{th}}(\eta \bar n_B)\right)&\leq \delta_{\text{QRE}}\label{eq:cov_req_comms}\\
    \max_{h\in\{0,1\}}nD\left(\hat{\rho}_{W,h}\middle\| \hat{\rho}_{\mathrm{th}}\left(\left(1-\tau_W\right)\bar n_W\right)\right)&\leq \delta_{\text{QRE}}.\label{eq:cov_req_sensing}
\end{align}
Setting $\delta_{\text{QRE}}=8\delta^2$ meets the constraints in  \eqref{eq:pe_willie_comms} and \eqref{eq:pe_willie_sensing}.

\subsection{Sparse Signaling}\label{subsec: Signaling Model}

We focus on the communication covertness constraint in \eqref{eq:cov_req_comms} since our results hold for \eqref{eq:cov_req_sensing} as well.
Per Section \ref{sec:introduction}, we focus on sparse signaling.
Suppose Alice and Bob flip a biased coin $n$ times, selecting the modes corresponding to ``head'' outcomes for transmission.  For ``tail'' outcomes, Alice remains quiet.
If $P(\text{``heads''})=\alpha$, and the outcomes along with the random code are hidden from Willie, he receives
\begin{align}
\label{eq:rho_W}
\hat{\rho}_W = (1-\alpha)\hat{\rho}_\text{th}\left(\eta\bar{n}_\text{B}\right)+ \alpha \hat{\sigma}_W,
\end{align}
where the thermal state
$\hat\rho_\text{th}\left(\eta\bar{n}_\text{B}\right)=\sum_{\ell=0}^{\infty}\lambda_\ell \ket{\ell}\bra{\ell}$,
$\lambda_\ell=\frac{(\eta \bar n_B)^\ell}{(1+\eta \bar n_B)^{\ell+1}}$, is induced by Alice inputting vacuum $\ket{0}\bra{0}$, and $\hat{\sigma}_W =
\tr_{B}
\left[
\mathcal E_{A\to BW}^{\eta,\bar n_B}(\hat\rho_A)
\right]=\mathcal E_{A\to W}^{\eta,\bar n_B}(\hat\rho_A)$,
is the output when Alice inputs $\hat{\rho}_A$, which is an average state over the random code and messages. We assume that $\hat{\rho}_A$'s mean photon number $\tr\left[\hat{N}\hat{\rho}_A\right]=\bar{n}_S$. Allowing only $\alpha$ to decay with $n$, we characterize $\hat{\rho}_A$'s optimal structure next.

\section{Characterization Of  Sparse Signaling}
\label{sec:primary-results}

We first state our main result:
\begin{theorem}[Optimal sparse input state structure]
\label{thm:adjacent_support} For a sparse signaling scheme,
\begin{align}
\label{eq:D_expansion}
D(\hat{\rho}_W\|\hat{\rho}_{\text{th}}\left(\eta\bar{n}_\text{B}\right))
=
\frac{\alpha^2}{2} C_W + o(\alpha^2).
\end{align}
Also, for $\tr[\hat\rho_A\hat N]=\bar n_S$ and $\kappa=\lfloor \bar n_S \rfloor$, $C_W$ is minimized by  
$
\hat{\rho}_{A}
=
(\kappa+1-\bar n_S)\ket{\kappa}\bra{\kappa}
+(\bar n_S-\kappa)\ket{\kappa+1}\bra{\kappa+1}$.
\end{theorem}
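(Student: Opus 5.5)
The plan has two parts: a second-order expansion of the relative entropy in $\alpha$, followed by an optimization of the resulting coefficient over input states with fixed mean photon number.

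\emph{Expansion in $\alpha$.} Write $\hat\rho_W=\hat\rho_{\rm th}+\alpha\hat\Delta$, where $\hat\rho_{\rm th}=\hat\rho_{\rm th}(\eta\nb)$ and $\hat\Delta=\hat\sigma_W-\hat\rho_{\rm th}$ is traceless. Because $\hat\rho_{\rm th}$ is full rank, the standard second-order perturbation of the quantum relative entropy (via the integral representation $\log\hat X=\int_0^\infty[(1+t)^{-1}-(\hat X+t)^{-1}]\dif t$) gives
\[
D(\hat\rho_W\|\hat\rho_{\rm th})=\frac{\alpha^2}{2}C_W+o(\alpha^2),
\]
with $C_W$ the Bogoliubov--Kubo--Mori Fisher information $\int_0^\infty\tr[\hat\Delta(\hat\rho_{\rm th}+t)^{-1}\hat\Delta(\hat\rho_{\rm th}+t)^{-1}]\dif t$. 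The first-order term vanishes since $\tr\hat\Delta=0$.

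\emph{Reduction to diagonal inputs.} $\hat\rho_{\rm th}$ is diagonal in the Fock basis, so $C_W=\sum_{\ell,m}|\Delta_{\ell m}|^2 f(\lambda_\ell,\lambda_m)$, where $f(x,x)=1/x$ and $f$ is the logarithmic-mean kernel, which is positive. The channel $\mathcal E_{A\to W}^{\eta,\nb}$ is phase covariant. Replacing $\hat\rho_A$ by its phase-averaged (diagonal) version therefore kills the off-diagonal entries of $\hat\Delta$, keeps $\bar n_S$ fixed, and can only decrease $C_W$ (equivalently, by convexity of $C_W$ in $\hat\Delta$). Hence it suffices to take $\hat\rho_A=\sum_k p_k\ket{k}\bra{k}$. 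The output $\hat\sigma_W=\sum_k p_k\hat\sigma^{(k)}$ is then diagonal with entries $s_\ell=\sum_k p_k q_{\ell|k}$, where $q_{\ell|k}=\bra{\ell}\mathcal E(\ket{k}\bra{k})\ket{\ell}$. This gives
\[
C_W=\sum_\ell\frac{s_\ell^2}{\lambda_\ell}-1=\mathbf p^{\mathsf T}G\,\mathbf p-1,\qquad G_{kk'}=\sum_\ell\frac{q_{\ell|k}q_{\ell|k'}}{\lambda_\ell},
\]
a $\chi^2$-divergence.

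\emph{Minimization.} We must minimize the convex quadratic $\mathbf p^{\mathsf T}G\mathbf p$ subject to $\sum_k p_k=1$, $\sum_k kp_k=\bar n_S$, $p\ge0$. The key claim is $G_{kk'}=g(k+k')$ for a function $g$ that is strictly convex in its argument (or at least has the ``two-point'' structure below). The natural route is the generating function: the thermal-loss output of $\ket{k}$ has a closed-form generating function, and $\sum_\ell z^\ell w^\ell/\lambda_\ell$ is geometric. I expect $\sum_{k,k'}G_{kk'}x^ky^{k'}$ to come out as a rational function of the form $A/(1-Bxy-\dots)$, which would show that $G_{kk'}$ depends on $(k,k')$ in a controlled way, e.g.\ $G_{kk'}=a\,b^{\min}\,c^{|k-k'|}$-type or a polynomial moment $\mathbb E[\phi(X_k)\phi(X_{k'})]$. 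The goal is to write $G_{kk'}=\mathbb E[\psi_k\psi_{k'}]$ with $\psi_k$ such that $k\mapsto\psi_k$ is \emph{convex} in $k$ pointwise (for instance $\psi_k(\ell)=q_{\ell|k}/\lambda_\ell$ as a function of $k$). Then for any $\mathbf p$ with mean $\bar n_S$, the two-point mixture $\mathbf p^\star$ on $\kappa,\kappa+1$ has $\sum_k p^\star_k\psi_k\le\sum_k p_k\psi_k$ pointwise, being the linear interpolation of a convex sequence at $\bar n_S$, i.e., the minimal value at that mean. Both sides are nonnegative, so squaring and summing with weights $\lambda_\ell$ gives $C_W(\mathbf p^\star)\le C_W(\mathbf p)$.

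\emph{Main obstacle.} The crux is proving that $k\mapsto q_{\ell|k}$ is convex in $k$ for every fixed $\ell$, i.e., that the second difference $q_{\ell|k+1}-2q_{\ell|k}+q_{\ell|k-1}\ge0$. Pointwise convexity could fail at small $\ell$ near the mode. If it does, I would fall back to proving convexity of the averaged quantity: use the generating function to express the second difference of $\sum_k p_k\psi_k$ inside the quadratic form and show that the Hessian-like term $\sum_\ell\lambda_\ell^{-1}(\Delta^2 q_{\ell|k})\,s_\ell$ is nonnegative. Alternatively, I would verify the KKT conditions directly: for the candidate $\mathbf p^\star$, show that $(G\mathbf p^\star)_k-\mu-\nu k\ge0$ for all $k$, with equality at $\kappa$ and $\kappa+1$. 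This amounts to convexity in $k$ of $(G\mathbf p^\star)_k$, which should follow from the explicit form of $G$ obtained via the generating function.
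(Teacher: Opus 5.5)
Your expansion and reduction steps are sound and parallel the paper. The paper dephases first via data processing and then expands the classical relative entropy; you expand via the BKM metric and then discard off-diagonal terms. Both arrive at $C_W=\mathbf p^{\mathsf T}G\mathbf p-1$.

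The gap is in the minimization, which is the substance of the theorem, and your primary route there cannot work. Since $\sum_\ell q_{\ell|k}=1$ for every $k$, the second differences $q_{\ell|k+1}-2q_{\ell|k}+q_{\ell|k-1}$ sum to zero over $\ell$. If they were nonnegative for every $\ell$, they would all vanish, so each $q_{\ell|k}$ would be affine in $k$. Nonnegative affine sequences whose slopes sum to zero are constant, which is absurd because Willie's mean photon number $\eta\nb+(1-\eta)k$ depends on $k$. Put differently, pointwise domination $S^\star(\ell)\le S(\ell)$ together with $\sum_\ell\lambda_\ell S^\star(\ell)=\sum_\ell\lambda_\ell S(\ell)=1$ would force every input of mean $\bar n_S$ to give the same output distribution. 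Concretely, as $\nb\to0$, $q_{1|k}=k(1-\eta)\eta^{k-1}$ has second difference $-2(1-\eta)^2<0$ at $k=1$, which is exactly the low-brightness case. Your KKT and ``averaged convexity'' fallbacks have the right shape, but they rest on an explicit form of $G$ that you only guess at, and the guesses are wrong.

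The missing ingredient is the diagonalization of this quadratic form (Lemma~\ref{lemma:dia_perturb_coef}). Expand $q_{\ell|k}/\lambda_\ell$ in Meixner polynomials, which are orthogonal for the geometric weight $\lambda_\ell=(1-\zeta)\zeta^\ell$. This gives $q_{\ell|k}/\lambda_\ell=\sum_{y\le k}\binom{k}{y}(-\theta)^yM_y(\ell;\zeta)$, hence
\[
G_{kk'}=\sum_{y\ge0}\Big(\frac{\theta^2}{\zeta}\Big)^{y}\binom{k}{y}\binom{k'}{y},\qquad C_W=\sum_{y\ge1}\zeta^{-y}\theta^{2y}\mu_y^2,\quad \mu_y=\sum_k p_k\binom{k}{y}.
\]
Equivalently, $\sum_{k,k'}G_{kk'}x^kw^{k'}=[(1-x)(1-w)-(\theta^2/\zeta)xw]^{-1}$. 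In particular, $G_{kk'}$ is not a function of $k+k'$, since $G_{11}=1+\theta^2/\zeta\neq1=G_{02}$.

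This formula realizes exactly your Gram-matrix strategy, but in the coordinates $y$ rather than the output photon number $\ell$. The vectors $\psi_k(y)=(\theta^2/\zeta)^{y/2}\binom{k}{y}$ satisfy $G_{kk'}=\sum_y\psi_k(y)\psi_{k'}(y)$. They are nonnegative and discrete convex in $k$: constant for $y=0$, linear for $y=1$, and with second difference $\binom{k}{y-2}\ge0$ for $y\ge2$. So $\mu_0=1$ and $\mu_1=\bar n_S$ are fixed by the constraints. Each $\mu_y$ with $y\ge2$ is simultaneously minimized by the two-point law on $\{\kappa,\kappa+1\}$ (Lemma~\ref{lemma:two_p_lb}). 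Nonnegativity then lets you square and sum with the nonnegative weights; this is the paper's proof. The same representation also validates your KKT fallback, because $(G\mathbf p^\star)_k=\sum_{y}(\theta^2/\zeta)^y\mu_y^\star\binom{k}{y}$ is a nonnegative combination of discrete-convex functions of $k$.
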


Thus, optimal $\hat{\rho}_A$ is a Fock-diagonal state supported on at most two consecutive Fock states.
The proof is developed through the sequence of results below. We first show using data processing inequality and dephasing channel that optimizing $D(\hat{\rho}_W\|\hat{\rho}_{\text{th}}\left(\eta\bar{n}_\text{B}\right))$ can be restricted to Fock-diagonal input states of the form $\hat{\rho}_{A}= \sum_{i\ge 0}\rho_i \ket{i}\bra{i}$. Then, we derive a constant coefficient $C_W$ when $\alpha$ decays to zero with $n$. Hence, the sparse-covertness problem reduces to minimizing $C_W$ over the input distribution $\{\rho_i\}$. Next, we show that the structure of $C_W$ depends on the binomial expectation over $\{\rho_i\}$. Finally, combining this representation with a discrete-convexity argument yields an optimizer supported on at most two consecutive Fock states.

\subsection{Fock-Diagonal Input State is Optimal for Sparse Signaling}
\label{subsec:dephase_reduc}

\begin{definition}
\label{def:phase_U}
(Phase rotation unitary).
$\hat{U}_\phi = e^{-\im\phi \hat N}, \qquad \phi\in[0,2\pi)$, where $\hat N=a^\dagger a$.
\end{definition}

\begin{definition}[Dephasing channel]
\label{def:dephase}
$\Delta(\hat\rho)
=
\sum_{\ell=0}^{\infty}
\ket{\ell}\bra{\ell}\hat{\rho}\ket{\ell}\bra{\ell}$.
Equivalently,
$\Delta(\hat\rho)
    =
    \int_0^{2\pi}
    \hat{U}_\phi\hat\rho \hat{U}_\phi
    \frac{\dif\phi}{2\pi}$.
\end{definition}
The dephasing channel yields a Fock-diagonal state while preserving the mean photon number. Since $\hat{\rho}_\text{th}\left(\eta\bar{n}_\text{B}\right)$ is diagonal in the Fock basis, $\Delta(\hat{\rho}_\text{th}\left(\eta\bar{n}_\text{B}\right))=\hat{\rho}_\text{th}\left(\eta\bar{n}_\text{B}\right)$. 

\begin{definition}(Phase covariant channel) 
\label{def:phase_cov_channel}
A channel described by CPTP map $\mathcal{N}_{A\to B}$ is called phase-covariant if, $\mathcal{N}_{A\to B}(U_\phi \hat\rho U_\phi^\dagger)
=
U_\phi\mathcal{N}_{A\to B}(\hat{\rho})U_\phi^\dagger
\quad \forall \phi$.
\end{definition}
The lossy thermal-noise and dephasing channels are phase-covariant \cite{weedbrook12gaussianQIrmp, bartlett2007reference}.
Thus,
\begin{align}
\label{eq:dephase_alice_dephase willie} 
\Delta(\hat{\rho}_W)
&= \notag 
(1-\alpha)\Delta\left(\hat{\rho}_\text{th}(\eta\bar{n}_\text{B})\right)
+
\alpha\Delta(\hat{\sigma}_W)   \notag\\  
&=(1-\alpha)\hat{\rho}_\text{th}(\eta\bar{n}_\text{B})
+
\alpha\mathcal{E}_{A\to W}^{\eta,\bar{n}_\text{B}}\left(\Delta(\hat\rho_A)\right).
\end{align}
We can restrict our optimization to Fock-diagonal $\hat{\rho}_A$, since by the data-processing inequality for the QRE \cite[Ch.~10.7.2]{wilde16quantumit2ed}, 
\begin{align}
\label{eq:DPI_sparse_applied}D\left(\hat{\rho}_W\|\hat{\rho}_\text{th}\left(\eta\bar{n}_\text{B}\right)\right)
\ge
D\left(\Delta\left(\hat{\rho}_W\right)\|\Delta\left(\hat{\rho}_\text{th}\left(\eta\bar{n}_\text{B}\right)\right)\right).
\end{align}

\subsection{Quantum Relative Entropy Expansion}

Consider diagonal inputs supported on a truncated Fock subspace
$\mathcal{H}_{r} = \mathrm{span}\{\ket{0},\ket{1},\ldots,\ket{r}\}$:
\begin{align}
    \hat{\rho}_A=\sum_{i=0}^{r} \rho_i \ket{i}\bra{i},\quad \rho_i \ge 0,\quad \sum_{i=0}^{r} \rho_i = 1.\label{eq:truncated_input}
\end{align}
Output of the lossy thermal-noise bosonic channel for Fock states and
the associated Laguerre-polynomial phase-space representation are well-known
\cite{weedbrook12gaussianQIrmp, guha04mastersthesis, schleich2001quantum}.
We use these to characterize Willie’s
 output state $\hat{\sigma}_W$ for input in \eqref{eq:truncated_input}. 

\begin{lemma}[Willie’s  output for truncated input]
\label{lemma:willie_output}
For input $\hat{\rho}_A$ in \eqref{eq:truncated_input}, output $\hat{\sigma}_W$ is Fock-diagonal:
$\hat{\sigma}_W=\sum_{\ell=0}^{\infty} p_\ell \ket{\ell}\bra{\ell}$ for
\begin{align}
p_\ell
=
\sum_{i=0}^{r} \rho_i
\int_{0}^{\infty}
e^{-(1+\eta \bar{n}_B)x}
L_\ell(x)
L_i\left((1-\eta)x\right)\dif x ,
\label{eq:pl_integral_lemma}
\end{align}
with $L_n(\cdot)$ the degree-$n$ Laguerre polynomial.
Equivalently:
$p_\ell
=
\sum_{i=0}^{r} \rho_i
\sum_{t=0}^{i}
\sum_{u=0}^{\ell}\sigma_{t,u}(x)$,
where $\sigma_{t,u}(\eta\bar{n}_{\text{B}})=(-1)^{t+u}
\binom{i}{t}
\binom{\ell}{u}
(1-\eta)^t
\frac{(t+u)!}{t!u!}
\frac{1}{(1+x)^{t+u+1}}
$.
\end{lemma}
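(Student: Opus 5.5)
The plan is to compute $\hat\sigma_W$ in phase space using the symmetrically ordered characteristic function $\chi_{\hat\rho}(\xi)=\tr[\hat\rho\,\hat D(\xi)]$, with $\hat D(\xi)=e^{\xi\hat a^\dagger-\xi^*\hat a}$. I treat one Fock input $\ket{i}\bra{i}$ first and then extend to \eqref{eq:truncated_input} by linearity of $\mathcal E_{A\to W}^{\eta,\nb}$.

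\emph{Diagonality.} The input $\hat\rho_A$ in \eqref{eq:truncated_input} is Fock-diagonal, so $\hat U_\phi\hat\rho_A\hat U_\phi^\dagger=\hat\rho_A$ for all $\phi$. The channel is phase-covariant (Definition~\ref{def:phase_cov_channel}), so $\hat\sigma_W$ is also invariant under every $\hat U_\phi$. Averaging over $\phi$ then gives $\Delta(\hat\sigma_W)=\hat\sigma_W$, i.e.\ $\hat\sigma_W$ is Fock-diagonal. It therefore suffices to compute $p_\ell=\bra{\ell}\hat\sigma_W\ket{\ell}$.

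\emph{Output characteristic function.} From Fig.~\ref{fig:channel-model}, Willie's mode obeys $\hat c=\sqrt{1-\eta}\,\hat a-\sqrt{\eta}\,\hat e$. This is a beam splitter acting on independent inputs, so
\[
\chi_{\hat\sigma_W}(\xi)=\chi_{\hat\rho_A}\bigl(\sqrt{1-\eta}\,\xi\bigr)\,\chi_{\hat\rho_{\rm th}(\nb)}\bigl(-\sqrt{\eta}\,\xi\bigr).
\]
I then substitute two standard forms. For a Fock state, $\chi_{\ket{i}\bra{i}}(\xi)=e^{-|\xi|^2/2}L_i(|\xi|^2)$. For a thermal state, $\chi_{\hat\rho_{\rm th}(N)}(\xi)=e^{-(N+1/2)|\xi|^2}$. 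The Gaussian exponents add to $\tfrac{1-\eta}{2}+\eta\nb+\tfrac{\eta}{2}=\eta\nb+\tfrac12$, so for input $\ket{i}\bra{i}$
\[
\chi_{\hat\sigma_W}(\xi)=e^{-(\eta\nb+1/2)|\xi|^2}L_i\bigl((1-\eta)|\xi|^2\bigr).
\]

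\emph{Overlap with $\ket{\ell}$.} Next I apply the overlap formula $\tr[\hat A\hat B]=\frac1\pi\int\chi_{\hat A}(\xi)\chi_{\hat B}(-\xi)\dif^2\xi$ with $\hat B=\ket{\ell}\bra{\ell}$, whose characteristic function is radial and real. This gives
\[
p_\ell=\frac1\pi\int e^{-(1+\eta\nb)|\xi|^2}L_\ell(|\xi|^2)L_i\bigl((1-\eta)|\xi|^2\bigr)\dif^2\xi .
\]
The integrand depends only on $|\xi|$, so I pass to polar coordinates with $x=|\xi|^2$, using $\dif^2\xi=\pi\,\dif x$ after the angular integral. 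Summing over $i$ with weights $\rho_i$ yields \eqref{eq:pl_integral_lemma}.

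\emph{Finite-sum form.} I expand both polynomials as $L_n(y)=\sum_k(-1)^k\binom nk y^k/k!$, taking index $t$ for $L_i$ and $u$ for $L_\ell$. Each term then reduces to the Gamma integral
\[
\int_0^\infty x^{t+u}e^{-(1+\eta\nb)x}\dif x=\frac{(t+u)!}{(1+\eta\nb)^{t+u+1}}.
\]
Collecting factors gives exactly $\sigma_{t,u}(\eta\nb)$.

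\emph{Main obstacle.} The only delicate point is bookkeeping of conventions. The normalization of $\chi$ and of the overlap formula must be consistent, and the sign $-\sqrt\eta$ on $\hat e$ is harmless only because the thermal characteristic function is even. I would also check that Willie's port indeed carries input transmissivity $1-\eta$ and thermal mean $\eta\nb$. As a sanity check, setting $\rho_0=1$ should reproduce $p_\ell=\lambda_\ell$.
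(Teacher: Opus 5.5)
Your proposal is correct and follows essentially the same route as the paper: phase covariance gives Fock-diagonality, and the output characteristic function for input $\ket{i}\bra{i}$ is $e^{-(\eta\bar n_B+1/2)|\xi|^2}L_i\bigl((1-\eta)|\xi|^2\bigr)$. The diagonal elements then come from an inversion integral, after which both Laguerre polynomials are expanded and the Gamma integral is applied. The only difference is that you derive explicitly, via the beam-splitter relation for $\hat c$ and the overlap formula $\tr[\hat A\hat B]=\frac{1}{\pi}\int\chi_{\hat A}(\xi)\chi_{\hat B}(-\xi)\,\mathrm{d}^2\xi$, what the paper only cites as a standard phase-space result and Fock-basis inversion formula; your normalizations, including $\mathrm{d}^2\xi\to\pi\,\mathrm{d}x$, check out.
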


The proof is in Appendix \ref{ap:Willie_output}. 
Diagonality of $\hat{\sigma}_W$ leads to:
\begin{lemma}
[Sparse-signaling QRE Expansion]
\label{lemma:sparse_qre_expansion}
For input $\hat{\rho}_A$ in \eqref{eq:truncated_input}, \eqref{eq:D_expansion} holds for $\alpha\to0$ and constant
$C_W=\sum_{\ell=0}^{\infty}\frac{(p_\ell-\lambda_\ell)^2}{\lambda_\ell}$.
\end{lemma}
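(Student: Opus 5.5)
Both $\hat\rho_W$ and $\hat\rho_{\text{th}}(\eta\bar n_B)$ are diagonal in the Fock basis, by \eqref{eq:rho_W} and Lemma~\ref{lemma:willie_output}. I would therefore reduce the QRE to a classical relative entropy between the distributions $q_\ell=(1-\alpha)\lambda_\ell+\alpha p_\ell$ and $\lambda_\ell$:
\begin{align*}
D(\hat\rho_W\|\hat\rho_{\text{th}})=\sum_{\ell=0}^\infty q_\ell\log\frac{q_\ell}{\lambda_\ell}
=\sum_{\ell=0}^\infty \lambda_\ell(1+\alpha\epsilon_\ell)\log(1+\alpha\epsilon_\ell),
\end{align*}
where $\epsilon_\ell=(p_\ell-\lambda_\ell)/\lambda_\ell$.

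The plan is to Taylor expand $(1+x)\log(1+x)=x+x^2/2+O(x^3)$. The linear term vanishes because $\sum_\ell(p_\ell-\lambda_\ell)=0$. The quadratic term gives $\frac{\alpha^2}{2}\sum_\ell(p_\ell-\lambda_\ell)^2/\lambda_\ell=\frac{\alpha^2}{2}C_W$.

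The delicate point is justifying that the remainder is $o(\alpha^2)$ uniformly over the infinite sum. The fix is to show $C_W<\infty$ and to dominate the tail. Because the input is truncated at $r$, the integral in \eqref{eq:pl_integral_lemma} gives $p_\ell$ decaying like $\ell^{r}(\eta\bar n_B/(1+\eta\bar n_B))^\ell$ times a polynomial factor. Hence $\epsilon_\ell$ grows at most polynomially in $\ell$.

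To handle this I would use the elementary bound $(1+x)\log(1+x)-x\le x^2/2$ for $x\ge -1$. Together with dominated convergence applied to $[(1+\alpha\epsilon_\ell)\log(1+\alpha\epsilon_\ell)-\alpha\epsilon_\ell]/\alpha^2\to\epsilon_\ell^2/2$ termwise, with summable dominator $\lambda_\ell\epsilon_\ell^2/2$, this yields \eqref{eq:D_expansion}. A convenient way to get the polynomial growth is to write $\hat\sigma_W$ via the Fock-input formula in Lemma~\ref{lemma:willie_output}.

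For the minimization, Section~\ref{subsec:dephase_reduc} already restricts attention to Fock-diagonal inputs, so $\hat\rho_A$ corresponds to a distribution $\{\rho_i\}$ with mean $\bar n_S$. Since $p_\ell=\sum_i\rho_i p_\ell^{(i)}$ is linear in $\rho$, $C_W$ is a quadratic form:
\begin{align*}
C_W=\sum_{i,k}\rho_i\rho_k M_{ik}-1,\qquad M_{ik}=\sum_\ell \frac{p^{(i)}_\ell p^{(k)}_\ell}{\lambda_\ell}.
\end{align*}

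The key step is to compute $M_{ik}$ in closed form. Using the integral representation, the Laguerre orthogonality/generating function, and the identity $\sum_\ell L_\ell(x)L_\ell(y)z^\ell$ (Hille–Hardy), I expect
\begin{align*}
M_{ik}=\mathbb{E}_{J\sim\mathrm{Bin}(?)}\big[\,c^{\,\cdot}\,\big]\quad\text{or concretely}\quad M_{ik}=f(\,\min(i,k),\max(i,k)).
\end{align*}
More plausibly, $C_W+1=\sum_j w_j\big(\mathbb{E}_\rho[g_j(i)]\big)^2$ with $g_j(i)=\binom{i}{j}\theta^j$, a ``binomial expectation'' as the paper hints. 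Here $\theta$ is a positive constant depending on $\eta$ and $\bar n_B$, and the weights $w_j$ are positive. Consequently $C_W$ is a nonnegative combination of squares of linear functionals $\mathbb{E}_\rho[\binom{i}{j}]$.

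With that representation, the optimization is driven by discrete convexity. For each $j$, $i\mapsto\binom{i}{j}$ is discretely convex in $i$. For any $\rho$ with mean $\bar n_S$, the two-point law $\rho^\star$ on $\{\kappa,\kappa+1\}$ with the same mean is the minimal element in the convex order among integer-valued laws with that mean. This holds because the piecewise-linear interpolation of any discrete-convex $g$ lies below the chord, so $\mathbb{E}_{\rho^\star}g\le\mathbb{E}_\rho g$.

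Since every $\mathbb{E}_\rho\binom{i}{j}\ge0$ and squares are increasing on $[0,\infty)$, each term of $C_W$ is minimized simultaneously by $\rho^\star$. Finally, the truncation $r$ is arbitrary, and states with infinite support can be approximated, so the conclusion extends to general inputs with $\tr[\hat\rho_A\hat N]=\bar n_S$.

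The main obstacle is deriving the sum-of-squares-of-binomial-moments form of $C_W$ with positive weights. If the signs do not work out directly, I would instead show that $M_{ik}$ has a kernel $K(i,k)$ whose second differences make $\rho\mapsto\rho^\top M\rho$ monotone under mean-preserving spreads. I would verify this by checking that merging mass from $\{i-1,i+1\}$ into $\{i\}$ decreases $C_W$.
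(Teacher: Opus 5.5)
Your argument for this lemma follows the paper's proof. You reduce the QRE to the classical relative entropy $\sum_\ell \lambda_\ell(1+\alpha r_\ell)\log(1+\alpha r_\ell)$ with $r_\ell=(p_\ell-\lambda_\ell)/\lambda_\ell$, and drop the linear term using $\sum_\ell(p_\ell-\lambda_\ell)=0$. You then control the infinite sum by dominated convergence, because $r_\ell$ grows polynomially (indeed $p_\ell/\lambda_\ell$ is a polynomial in $\ell$ of degree at most $r$) while $\lambda_\ell$ decays geometrically. Applying dominated convergence directly to $\lambda_\ell f(\alpha r_\ell)/\alpha^2$, with $f(x)=(1+x)\log(1+x)-x$, is slightly more direct than the paper's route of differentiating twice under the sum and then invoking Taylor's theorem.

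However, the inequality you use for the dominator is false. $f(x)\le x^2/2$ holds only for $x\ge 0$. For $x\in(-1,0)$ one has $f''(x)=1/(1+x)>1$, so $f(x)>x^2/2$ there: $f(-1)=1$, and even $f(-0.1)\approx 0.00518>0.005$. Hence $\lambda_\ell r_\ell^2/2$ does not dominate the terms with $p_\ell<\lambda_\ell$, and the convergence step as written is unjustified.

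The repair is immediate and uses the same observation as the paper's bound $\lambda_\ell r_\ell^2/(1-\alpha_0)$:
\begin{itemize}
\item Since $p_\ell\ge 0$, you have $r_\ell\ge -1$, so $\alpha r_\ell\ge -\alpha_0$ for $0<\alpha\le\alpha_0<1$.
\item On $[-\alpha_0,\infty)$, $f''\le 1/(1-\alpha_0)$. Together with $f(0)=f'(0)=0$, this gives $0\le f(\alpha r_\ell)/\alpha^2\le r_\ell^2/\bigl(2(1-\alpha_0)\bigr)$.
\item Alternatively, $f(x)\le x^2$ holds on all of $[-1,\infty)$.
\end{itemize}
Either dominator is summable, since $\sum_\ell\lambda_\ell r_\ell^2=C_W<\infty$, and dominated convergence then yields $D=\frac{\alpha^2}{2}C_W+o(\alpha^2)$.

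The rest of your proposal (the quadratic form in $\rho$, the binomial-moment representation, and discrete convexity) addresses Lemma~\ref{lemma:dia_perturb_coef} and Theorem~\ref{thm:adjacent_support}, not this lemma, and is not needed here.
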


\begin{IEEEproof}[Proof idea]
Diagonality of $\hat{\rho}_W$ and $\hat{\rho}_\text{th}\left(\eta\bar{n}_\text{B}\right)$ simplifies the QRE to the classical relative entropy and reduces the proof to a scalar second-order Taylor expansion in Appendix \ref{ap:proof_QRE_expansion}. 
\end{IEEEproof}

\subsection{Orthogonal Structure Optimization}
The representation of  $C_W$ in Lemma~\ref{lemma:sparse_qre_expansion} is explicit, but unsuitable for structural analysis and optimization of $\hat{\rho}_A$. Indeed, each $p_\ell$ in the expression for $C_W$ in Lemma~\ref{lemma:sparse_qre_expansion} depends on the entire input distribution $\{\rho_i\}$, making it difficult to identify the input state features that increase detectability by Willie. Thus, we derive an orthogonal representation of $C_W$ as a weighted sum of squares of input-dependent moment quantities, which is a key tool for proving Theorem~\ref{thm:adjacent_support}.

    \begin{definition}[Meixner polynomials \protect{\cite[Sec.~9.10]{koekoek2010hypergeometric}}]\label{def:meixner}
For \(0<\zeta<1\) and \(\ell,y\in\{0,1,2,\dots\}\), the Meixner polynomials \(\{M_y(\ell;\zeta)\}_{y\ge 0}\) are defined 
 by the hypergeometric function
$M_y(\ell;\zeta)
=
{}_2F_1\left(
\begin{matrix}
-y,-\ell\\
1
\end{matrix}
;1-\frac{1}{\zeta}
\right)$.
Equivalently, they are characterized by the generating function
$\sum_{y=0}^{\infty} M_y(\ell;\zeta)\nu^y
=
\left(1-\frac{\nu}{\zeta}\right)^{\ell}(1-\nu)^{-\ell-1}$, $|\nu|<1$.
Moreover, they are orthogonal with respect to the geometric distribution
\(\lambda_\ell=(1-\zeta)\zeta^\ell\), namely
$\sum_{\ell=0}^{\infty}\lambda_\ell M_y(\ell;\zeta)M_{y'}(\ell;\zeta)
=
\zeta^{-y}\delta_{y,y'}$.
\end{definition}

\begin{lemma}(Orthogonal form of $C_W$ in Lemma~\ref{lemma:sparse_qre_expansion}).
\label{lemma:dia_perturb_coef}
$
C_W
=
\sum_{y = 1}^r \zeta^{-y} \theta^{2y} \mu_y^2$, 
where
$\mu_y=\sum_{i = y}^r \rho_i \binom{i}{y}$,
$\zeta=\frac{\eta \bar{n}_B}{1+\eta \bar{n}_B}$, and $\theta=\frac{1-\eta}{1+\eta \bar{n}_B}$.
\end{lemma}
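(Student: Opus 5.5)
The plan is to expand the likelihood ratio $p_\ell/\lambda_\ell$ in the Meixner basis of Definition~\ref{def:meixner}, which is orthogonal with respect to the thermal weights $\lambda_\ell=(1-\zeta)\zeta^\ell$, $\zeta=\eta\bar n_B/(1+\eta\bar n_B)$. Write
\begin{align*}
\frac{p_\ell}{\lambda_\ell}=\sum_{y\ge 0} c_y M_y(\ell;\zeta).
\end{align*}
Since $M_0\equiv 1$, this gives $C_W=\sum_\ell \lambda_\ell\bigl(p_\ell/\lambda_\ell-1\bigr)^2=(c_0-1)^2+\sum_{y\ge1}c_y^2\zeta^{-y}$ by the orthogonality relation. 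So everything reduces to identifying the coefficients.

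\textbf{The constant term.} Orthogonality with $y'=0$ gives $c_0=\sum_\ell p_\ell=1$, so the $y=0$ term vanishes. The remaining task is to show that $c_y=(-1)^y\theta^y\mu_y$, with some fixed sign pattern, for $1\le y\le r$, and $c_y=0$ for $y>r$. Squaring then yields the claimed formula.

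\textbf{Output generating function.} I will compare generating functions in the photon-number variable $s$. From Lemma~\ref{lemma:willie_output}, I compute $G_i(s)=\sum_\ell p^{(i)}_\ell s^\ell$ for a Fock input $\ket{i}$. Insert the Laguerre generating function $\sum_\ell L_\ell(x)s^\ell=(1-s)^{-1}e^{-xs/(1-s)}$ into \eqref{eq:pl_integral_lemma}. Then evaluate $\int_0^\infty e^{-cx}L_i((1-\eta)x)\dif x$ using the standard Laplace transform $\int_0^\infty e^{-cx}L_i(bx)\dif x=(c-b)^i/c^{i+1}$. Here $c=(1+\eta\bar n_B)+s/(1-s)$ and $b=1-\eta$. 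Simplifying should give
\begin{align*}
G_i(s)=\frac{1-\zeta}{1-\zeta s}\left(1-\theta\,\frac{1-s}{1-\zeta s}\right)^{i}.
\end{align*}
This is the expected form: a thermal term times a ``thinned'' factor. As a consistency check, at $i=0$ it reduces to the thermal generating function.

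\textbf{Coefficient matching.} Expand the power binomially:
\begin{align*}
G_i(s)=\sum_{y=0}^{i}\binom{i}{y}(-\theta)^y\,\frac{(1-\zeta)(1-s)^y}{(1-\zeta s)^{y+1}}.
\end{align*}
Averaging over $\rho_i$ turns $\sum_i\rho_i\binom{i}{y}$ into $\mu_y$. It then remains to show that $\sum_\ell \lambda_\ell M_y(\ell;\zeta)s^\ell=(1-\zeta)(1-s)^y/(1-\zeta s)^{y+1}$. This follows from the bivariate generating function obtained from Definition~\ref{def:meixner}:
\begin{align*}
\sum_{y}\nu^y\sum_\ell\lambda_\ell M_y(\ell;\zeta)s^\ell
=\frac{1-\zeta}{1-\nu-\zeta s+\nu s}
=\frac{1-\zeta}{1-\zeta s}\sum_y\left(\frac{\nu(1-s)}{1-\zeta s}\right)^{y}.
\end{align*}
Comparing coefficients of $\nu^y$ establishes the identity. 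Uniqueness of power series then gives $p_\ell=\lambda_\ell\sum_{y=0}^r(-\theta)^y\mu_y M_y(\ell;\zeta)$. Because this is a finite sum, no convergence issues arise when exchanging it with the orthogonality sum.

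\textbf{Main obstacle.} I expect the hardest step to be the closed form for $G_i(s)$: the Laguerre integral and the algebra reducing it to $\theta$ and $\zeta$, including checking that $c-b=(1+\eta\bar n_B)(1-\zeta s-\theta(1-s))/(1-s)$. If that algebra resists, I would instead verify $G_i$ directly from the double-sum form of $p_\ell$ in Lemma~\ref{lemma:willie_output}.
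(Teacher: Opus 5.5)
Your proof is correct and follows the same strategy as the paper. Both derive the finite Meixner expansion $p_\ell/\lambda_\ell-1=\sum_{y=1}^{r}(-\theta)^y\mu_y M_y(\ell;\zeta)$ from generating functions and the Laguerre Laplace transform, then use orthogonality to remove the cross terms. The only difference is bookkeeping: the paper generates over the input index $i$ at fixed $\ell$ and matches against the Meixner generating function at $\nu=-\theta s/(1-s)$ together with $\sum_{i\ge y}\binom{i}{y}s^i=s^y/(1-s)^{y+1}$, whereas you generate over the output index $\ell$, so the binomial coefficients come directly from expanding $\bigl(1-\theta\frac{1-s}{1-\zeta s}\bigr)^i$ and you instead need the weighted transform $\sum_\ell\lambda_\ell M_y(\ell;\zeta)s^\ell=(1-\zeta)(1-s)^y/(1-\zeta s)^{y+1}$, which your bivariate generating-function computation establishes correctly.
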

\begin{IEEEproof}[Proof idea]
We make the dependence of $C_W$ on $\{\rho_i\}$ explicit: starting from  \eqref{eq:pl_integral_lemma}, we first write $C_W$ as a quadratic form with respect to $\{\lambda_{\ell}\}$. Since $\lambda_\ell$ is geometric, the Meixner polynomials provide a natural orthogonal basis. We expand the normalized coefficients in this basis using the generating functions and express $\frac{p_\ell}{\lambda_\ell}-1$ using the binomial expectation $\mu_y$ of $\{\rho_i\}$.
Finally, substituting this expansion into $C_W$, and invoking Meixner orthogonality, removes the cross terms, yielding the result. The details are in Appendix \ref{ap:proof_orthogonal_C}.
\end{IEEEproof}

Our final ingredient involves discrete convexity.

\begin{definition}
[Discrete convexity \protect{\cite[p.~42, Ex.~1]{niculescu2006convex}}]
\label{def:disc_covex} 
A function $\varphi:\mathbb{Z}_{\ge 0}\to\mathbb{R}$ is discrete convex if, for every $i\ge 0$, the increments $\varphi(i+1)-\varphi(i)$ are nondecreasing in $i$, or
 $\varphi(i+2)-2\varphi(i+1)+\varphi(i)\ge 0$.
\end{definition}

\begin{lemma}(Two-point lower bound)
\label{lemma:two_p_lb}
Let $\varphi$ be discrete convex and let $\mathrm{I}$ be any integer-valued random variable with mean
$\mathbb{E}[\mathrm{I}]=\kappa+\beta,~ \kappa= \lfloor \mathbb{E}[\mathrm{I}]\rfloor,~ \beta\in[0,1).$
Then
$\mathbb{E}[\varphi(\mathrm{I})]\ \ge\ (1-\beta)\varphi(\kappa)+\beta\varphi(\kappa+1)$, 
with equality for $\Pr[\mathrm{I}=\kappa]=1-\beta$ and $\Pr[\mathrm{I}=\kappa+1]=\beta$.    
\end{lemma}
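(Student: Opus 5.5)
The plan is the classical "supporting line" argument for discrete convex functions. Let $\ell(i)=(1-\beta)\varphi(\kappa)+\beta\varphi(\kappa+1)+s\,(i-\kappa-\beta)$ be the affine function through $(\kappa,\varphi(\kappa))$ and $(\kappa+1,\varphi(\kappa+1))$. Its slope is $s=\varphi(\kappa+1)-\varphi(\kappa)$, and indeed $\ell(\kappa)=\varphi(\kappa)$ and $\ell(\kappa+1)=\varphi(\kappa+1)$. We will show that $\varphi(i)\ge \ell(i)$ for every integer $i$ in the support of $\mathrm{I}$. Taking expectations and using linearity together with $\mathbb{E}[\mathrm{I}]=\kappa+\beta$ then gives $\mathbb{E}[\varphi(\mathrm{I})]\ge \mathbb{E}[\ell(\mathrm{I})]=\ell(\kappa+\beta)=(1-\beta)\varphi(\kappa)+\beta\varphi(\kappa+1)$.

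The pointwise inequality follows from the monotone increments in Definition~\ref{def:disc_covex}. Write $d_j=\varphi(j+1)-\varphi(j)$, which is nondecreasing in $j$, so $s=d_\kappa$.

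For $i\ge \kappa+1$, telescope: $\varphi(i)-\varphi(\kappa+1)=\sum_{j=\kappa+1}^{i-1}d_j\ge (i-\kappa-1)d_\kappa$. This says exactly $\varphi(i)\ge \ell(i)$.

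For $0\le i\le \kappa$, telescope the other way: $\varphi(\kappa)-\varphi(i)=\sum_{j=i}^{\kappa-1}d_j\le (\kappa-i)d_\kappa$. Hence $\varphi(i)\ge \varphi(\kappa)-(\kappa-i)s=\ell(i)$.

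Since $\varphi$ is defined on $\mathbb{Z}_{\ge0}$, we take $\mathrm{I}$ nonnegative, which is the case of interest, $\mathrm{I}$ being a photon number. The expectation $\mathbb{E}[\varphi(\mathrm{I})]$ may equal $+\infty$, in which case the bound is trivial. Otherwise $\ell(\mathrm{I})$ is integrable because $\mathrm{I}$ has finite mean, so linearity applies.

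Equality for the two-point law is a direct computation: $\mathbb{E}[\varphi(\mathrm{I})]=(1-\beta)\varphi(\kappa)+\beta\varphi(\kappa+1)$, and this law has mean $\kappa+\beta$ as required.

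I expect no step to be a real obstacle. The only care needed is the sign bookkeeping in the two telescoping cases and the integrability remark. In the later application to Theorem~\ref{thm:adjacent_support}, the substantive work will instead be verifying that the relevant $\varphi$, built from $\binom{i}{y}$ and the weights in Lemma~\ref{lemma:dia_perturb_coef}, is discrete convex.
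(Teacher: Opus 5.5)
Your proof is correct and is essentially the paper's argument. The paper also uses the supporting line $\varphi(\kappa)+f(i-\kappa)$ with $f=\varphi(\kappa+1)-\varphi(\kappa)$, which is the same as your $\ell(i)$; it checks that this line lies below $\varphi$ by telescoping the nondecreasing increments on either side of $\kappa$, and then takes expectations using $\mathbb{E}[\mathrm{I}]=\kappa+\beta$. Your remarks on the nonnegativity of $\mathrm{I}$ and on integrability are a harmless, slightly more careful addition that the paper does not include.
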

We defer the detailed proof to Appendix \ref{ap:two_p_lb}. 
To use Lemma~\ref{lemma:two_p_lb} in our setting, fix $y\ge 2$, and define
$
\varphi_y(i)=\binom{i}{y}
$
for $i\in\mathbb{Z}_{\ge 0}$. Then, $\varphi_y$ is discrete convex, since two successive applications of Pascal’s identity $\binom{i+1}{y}=\binom{i}{y}+\binom{i}{y-1}$ yield
$\binom{i+2}{y}-2\binom{i+1}{y}+\binom{i}{y}
=
\binom{i}{y-2}
\ge 0$.
Thus, Lemma~\ref{lemma:two_p_lb} applies to binomial expectation $\mu_y$.
We can now prove our main result.
\begin{IEEEproof}[Proof (Theorem~\ref{thm:adjacent_support})]
The dephasing reduction in Section~\ref{subsec:dephase_reduc} allows optimizing only Fock-diagonal inputs with photon-number distributions $\{\rho_i\}$.
For these, Lemma~\ref{lemma:willie_output} and
Lemma~\ref{lemma:sparse_qre_expansion} yield the expansion in \eqref{eq:D_expansion}.
By Lemma~\ref{lemma:dia_perturb_coef}, coefficient $C_W$ is a weighted sum of squared binomial expectations $\mu_y^2$ of the input photon-number distribution $\{\rho_i\}$. 
Since $\mu_1=\sum_{i= 1}^r i\rho_i=\bar n_S$, minimizing $C_W$ reduces to minimizing the terms for $y\geq 2$. 
Let $\bar n_S=\kappa+\beta,\quad \kappa=\lfloor \bar n_S\rfloor,\quad \beta\in[0,1)$.
Applying Lemma~\ref{lemma:two_p_lb} yields
$\mu_y=\sum_{i\ge y}\rho_i\binom{i}{y}\ge(1-\beta)\binom{\kappa}{y}+\beta\binom{\kappa+1}{y}$, 
$y\ge2$.
Equality is attained simultaneously for every  $\mu_y$, $y\ge2$, by the two-point
distribution
    $\rho_\kappa=1-\beta$,
    $\rho_{\kappa+1}=\beta$,
    $\rho_i=0$ for $i\notin\{\kappa,\kappa+1\}$.
Since coefficients $\zeta^{-y}\theta^{2y}\geq0$, it minimizes $C_W$. Setting $r>\bar{n}_S$ completes the proof.
\end{IEEEproof}

\subsection{Optimality in the Low-Brightness Regime}
\label{subsec:low_brightness_sparse_optimum}
Consider two regimes for Alice's transmission:  $0 \le \bar{n}_S \le 1$ and $\bar{n}_S>1$.
We call these low- and high-brightness regimes, respectively.
In the high-brightness regime, the QRE expansion coefficient $C_W$ in \eqref{eq:D_expansion} exceeds the lower bound in \cite[Th.~1]{bullock20discretemod}, rendering sparse signaling suboptimal.
However, the low-brightness regime is, arguably, more operationally significant \cite{pirandola2021limits, sorelli22qitutotial}.
There, the optimal $\hat{\rho}_{A}=(1-\bar{n}_S)\ket{0}\bra{0}+\bar{n}_S \ket{1}\bra{1}$ per Theorem \ref{thm:adjacent_support}.
The corresponding QRE expansion coefficient $C_W=\frac{(1-\eta)^2\bar{n}_S^2}{\eta \bar n_B(1+\eta \bar n_B)}$ in \eqref{eq:D_expansion} matches the lower bound in \cite[Th.~1]{bullock20discretemod}.
Thus, sparse signaling with $\hat{\rho}_{A}$ is optimal in the low-brightness regime, and we next focus on communication and sensing in this setting.

\section{Impact on Covert Communication and Sensing}

\begin{figure*}[htb]
\centering
\subfloat[$\eta=0.4$]{%
    \includegraphics[width=0.31\textwidth]{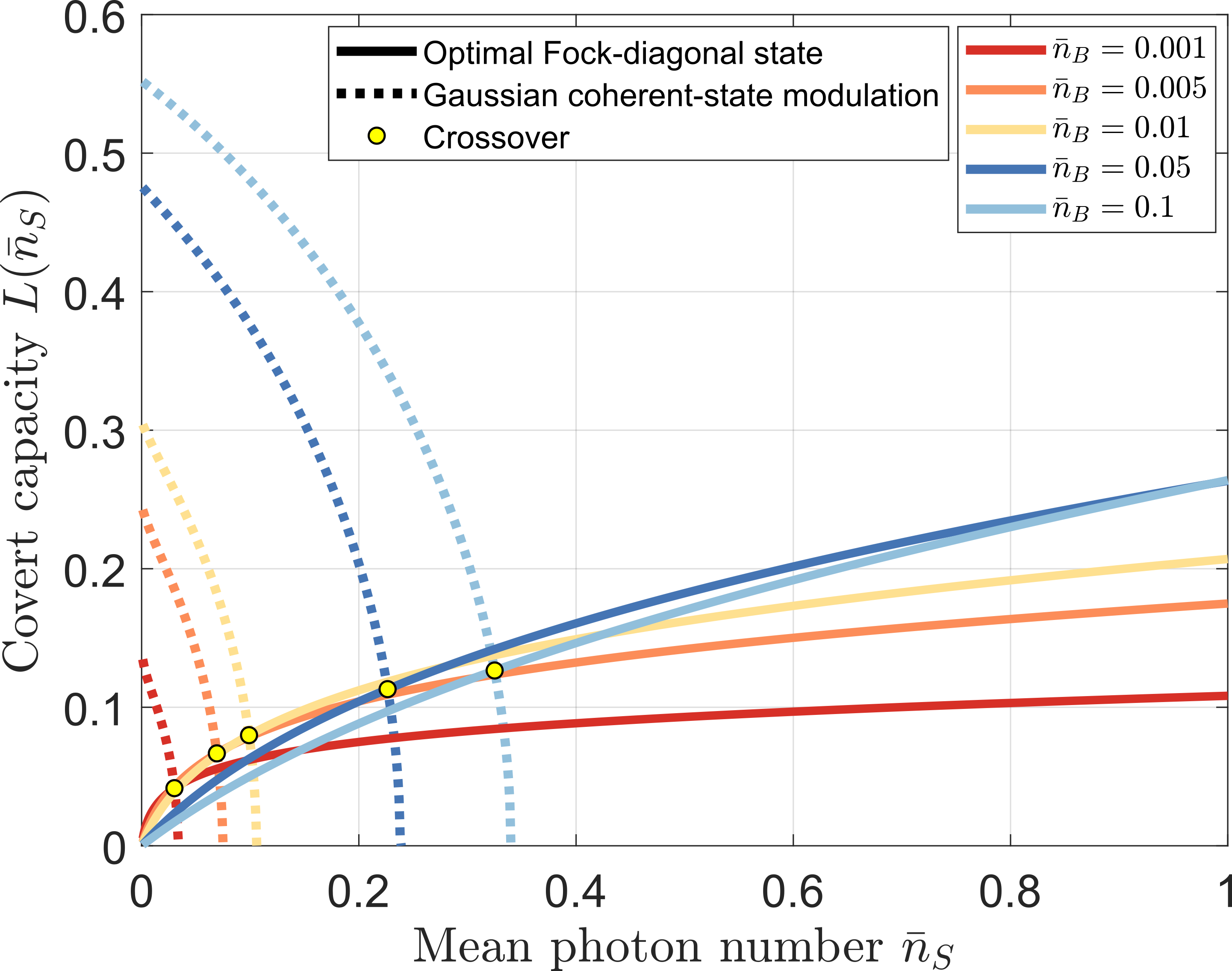}
    \label{fig:comms_comp-1}
}
\hfill
\subfloat[$\eta=0.6$]{%
    \includegraphics[width=0.31\textwidth]{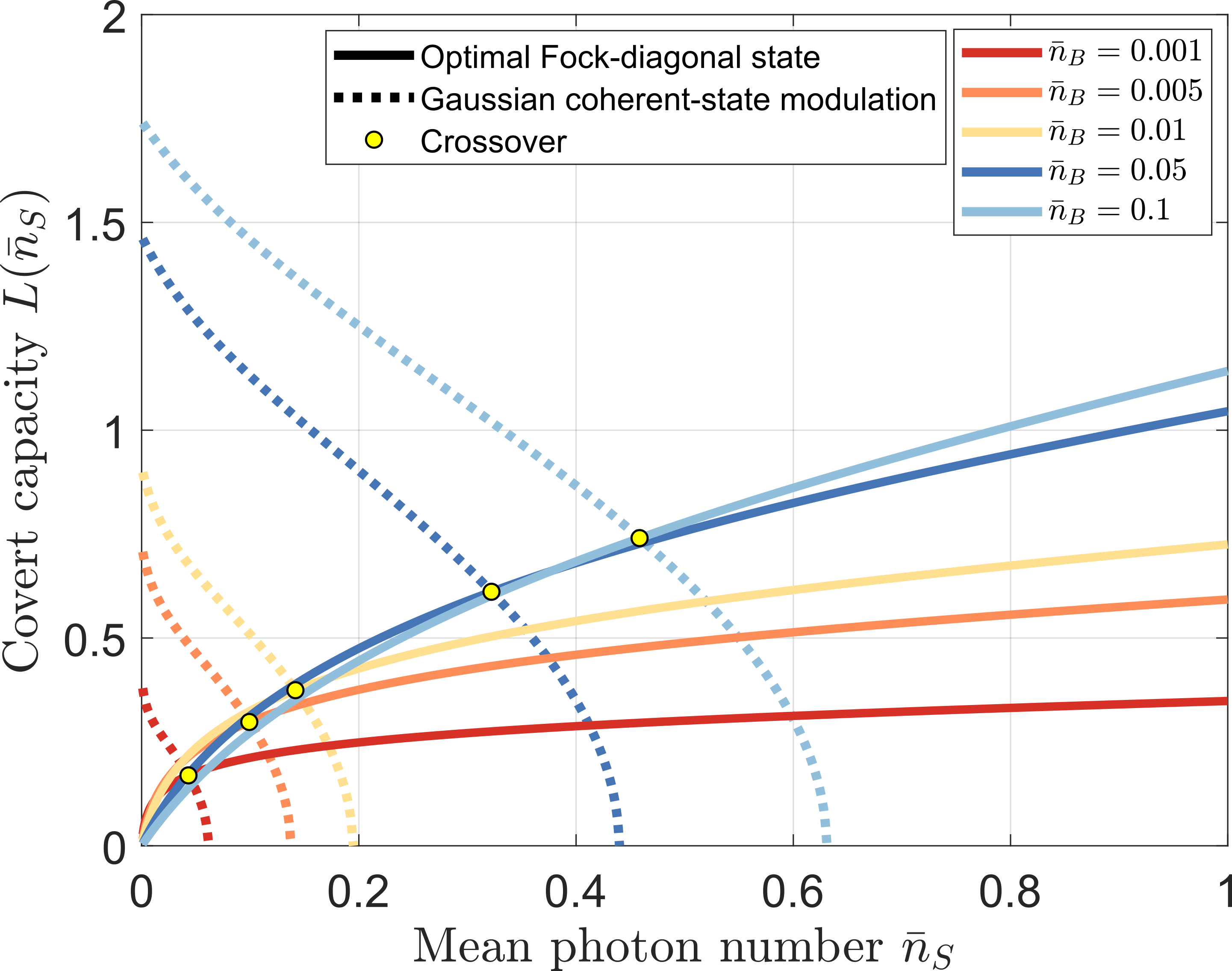}
    \label{fig:comms_comp-2}
}
\hfill
\subfloat[$\eta=0.8$]{%
    \includegraphics[width=0.31\textwidth]{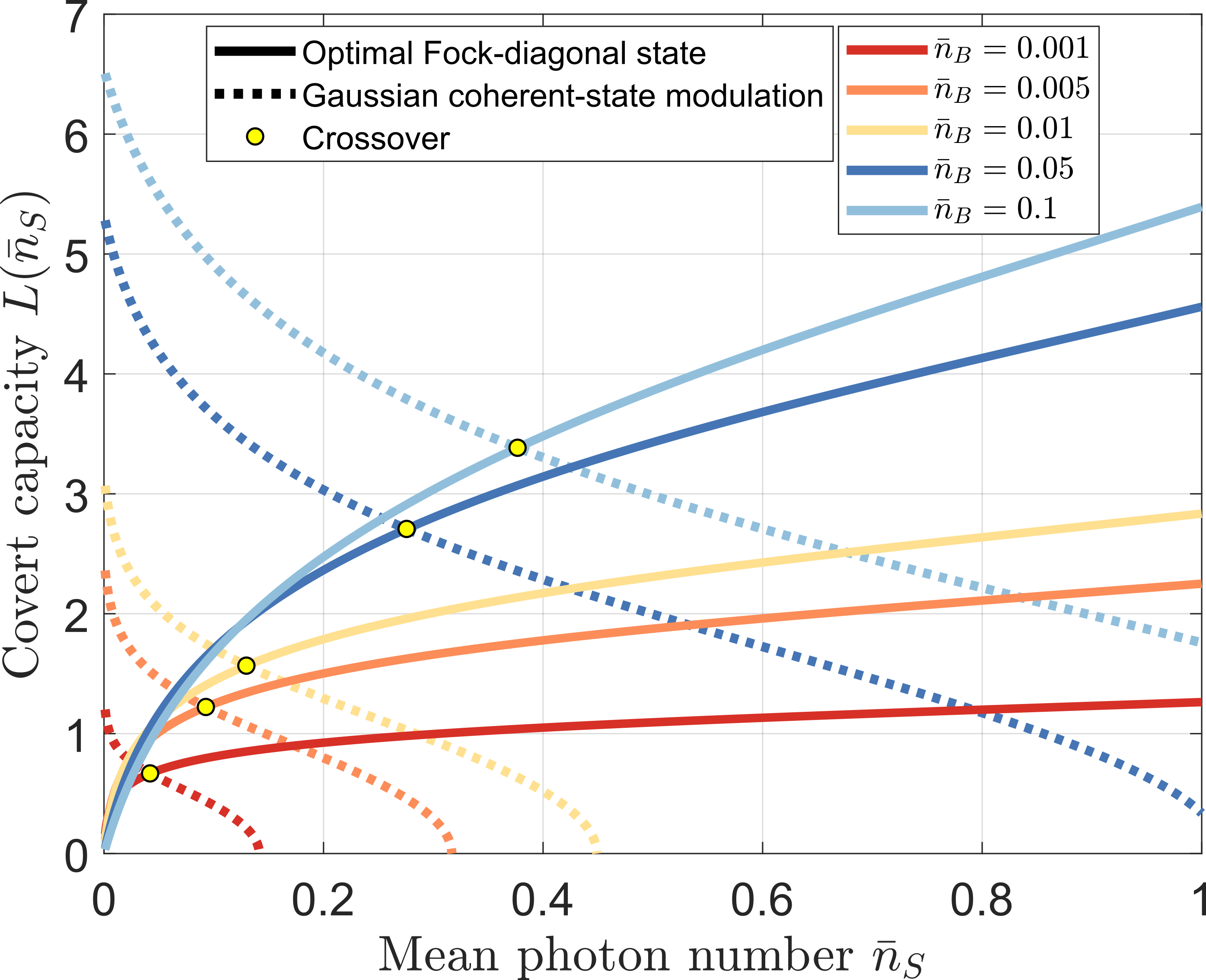}
    \label{fig:comms_comp-3}
}
\caption{Covert communication capacity $L(\bar{n}_S)$ vs.~mean photon number \(\bar n_S\) for sparse signaling over the lossy thermal-noise bosonic channel. The three panels correspond to \(\eta=0.4\), \(\eta=0.6\), and \(\eta=0.8\), respectively. Solid curves show the covertness-optimized Fock-diagonal input; dotted curves show Gaussian coherent-state modulation.
Different colors correspond to different background photon numbers $\bar{n}_B$. Yellow markers denote the performance crossover between the covertness-optimized Fock-diagonal scheme and the Gaussian coherent-state modulation.}
\label{fig:plot_comms_comp}
\end{figure*}

\begin{figure*}[htb]
\centering
\subfloat[$\gamma^{(1)} = 0.4$]{%
    \includegraphics[width=0.31\textwidth]{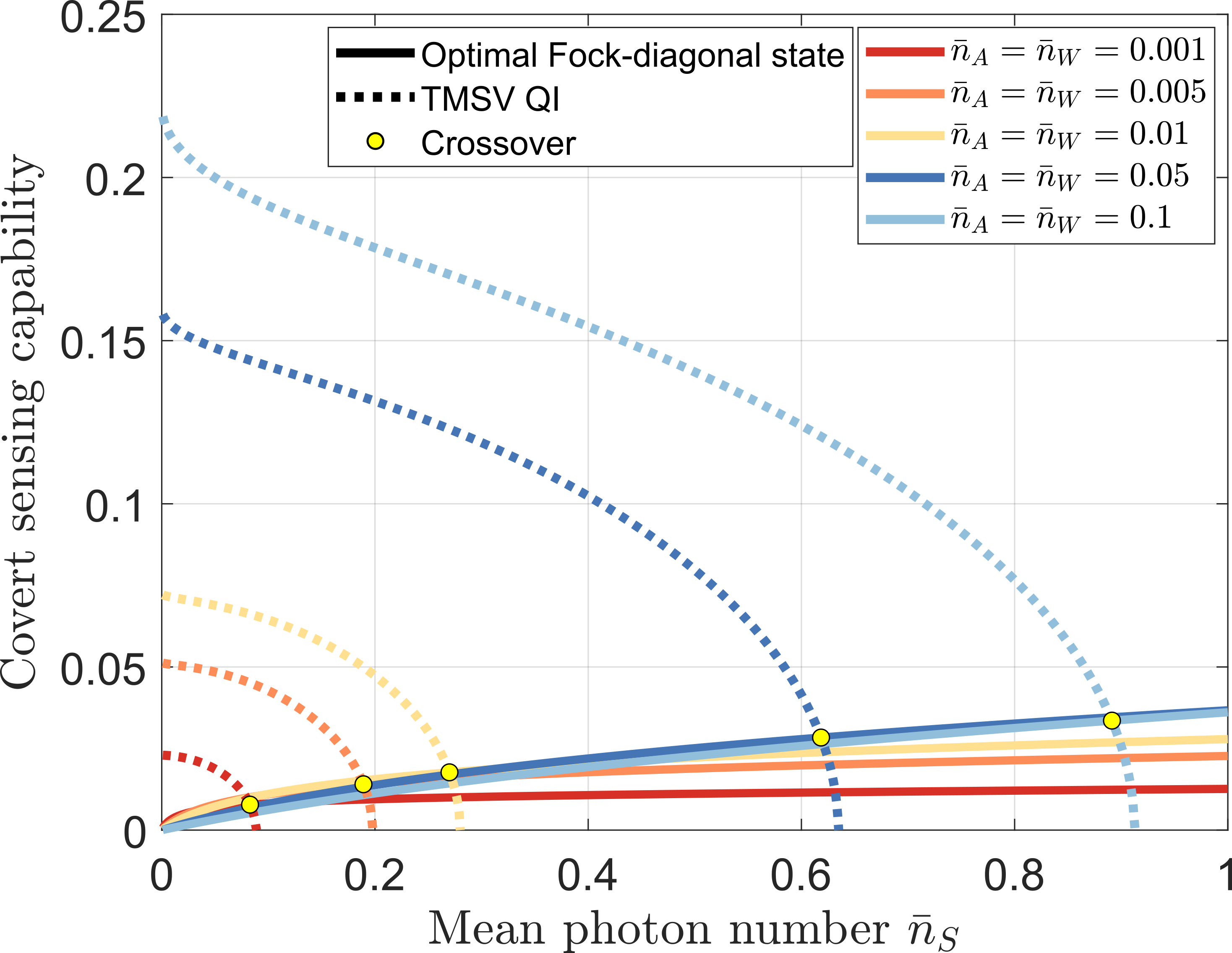}
    \label{fig:sensing_comp-1}
}
\hfill
\subfloat[$\gamma^{(1)} = 0.6$]{%
    \includegraphics[width=0.31\textwidth]{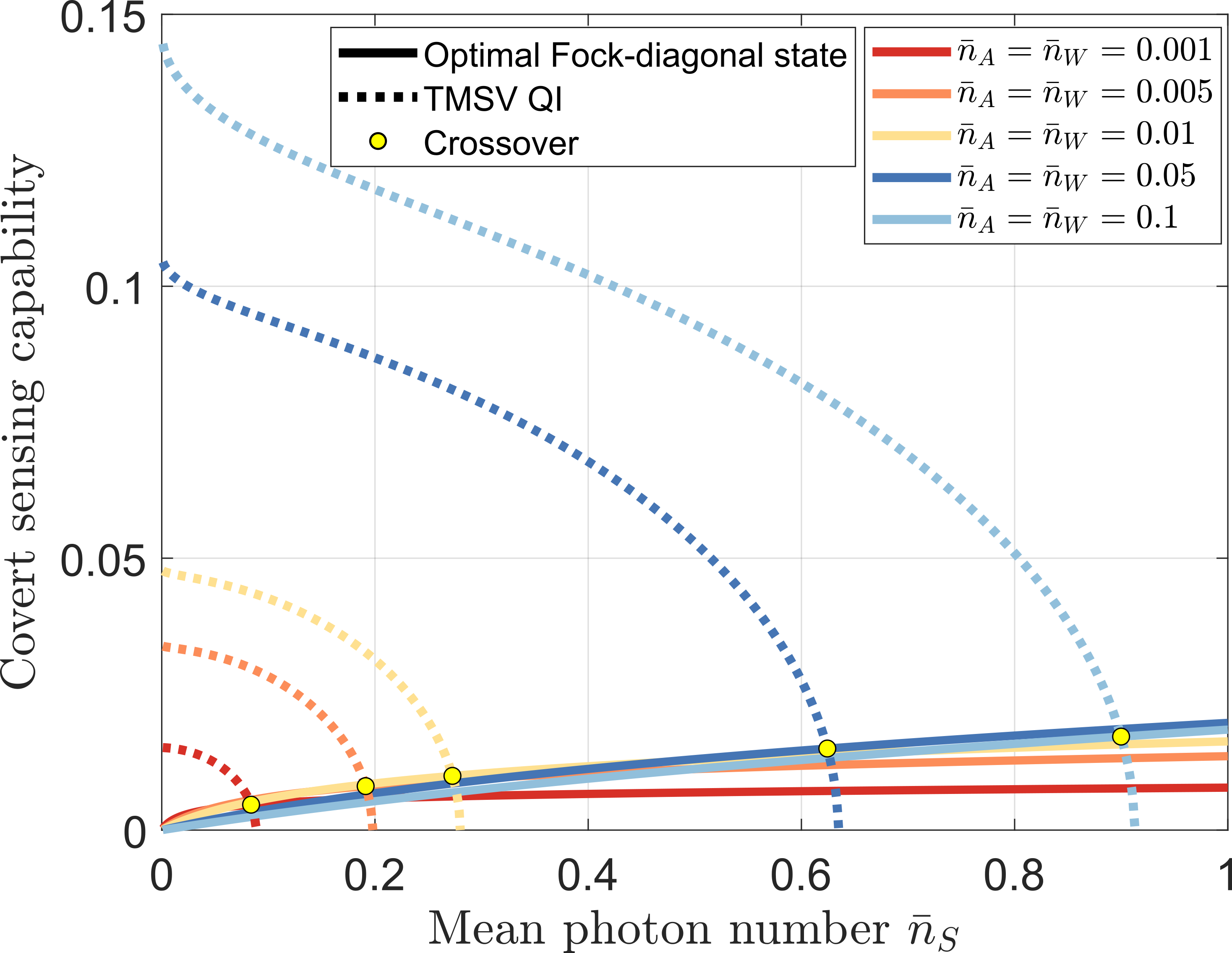}
    \label{fig:sensing_comp-2}
}
\hfill
\subfloat[$\gamma^{(1)} = 0.8$]{%
    \includegraphics[width=0.31\textwidth]{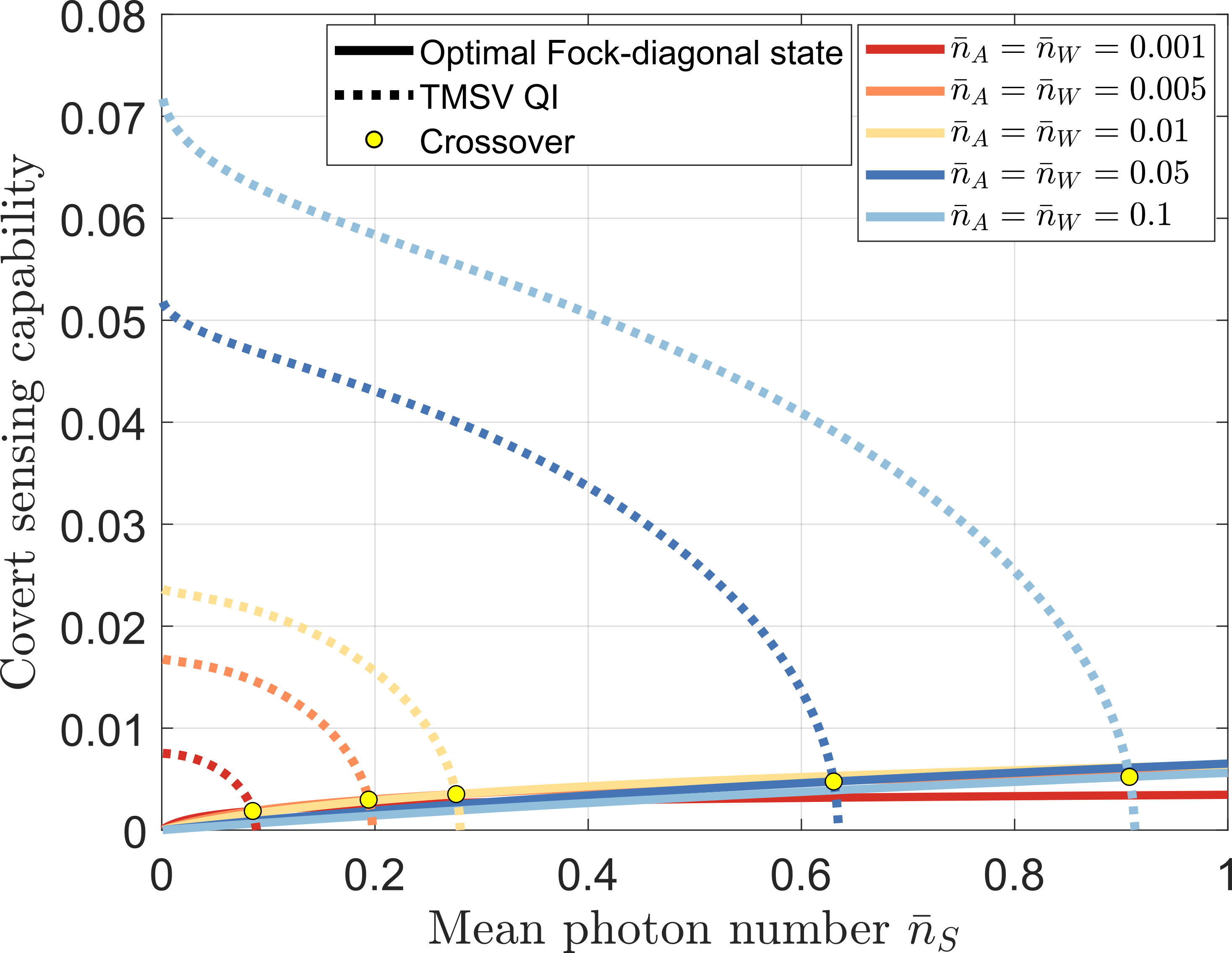}
    \label{fig:sensing_comp-3}
}
\caption{Covert sensing capability vs.~mean photon number \(\bar n_S\) for sparse signaling with  \(\tau_A=\tau_W=0.3\). The three panels correspond to target-present parameters \(\gamma^{(1)}=0.4\), \(\gamma^{(1)}=0.6\), and \(\gamma^{(1)}=0.8\), respectively. Solid curves show the covertness-optimized Fock-diagonal probe using heralded single-photon states; dotted curves show the sparse two-mode squeezed vacuum (TMSV) quantum illumination (QI). Different colors correspond to different background photon numbers \(\bar n_B\). Yellow markers denote the performance crossover between the covertness-optimized Fock-diagonal scheme and sparse TMSV QI.}
\label{fig:plot_sensing_comp}
\end{figure*}

\label{sec:covert_regime}

Although sparse coding using $\hat{\rho}_A$ derived in Theorem \ref{thm:adjacent_support} optimizes covertness, other states are more suited for communication and sensing.
The optimal choice depends on the mean signal photon number $\bar{n}_S$ as well as the channel parameters.
Thus, Fig.~\ref{fig:plot_comms_comp} plots maximum number $L(\bar{n}_S)$ of covert bits transmissible per $\sqrt{\text{mode}}$ (i.e., covert capacity) for:
\begin{itemize}
    \item on-off keying using the mixture of vacuum and single-photon state, optimized for covertness, and
    \item a circularly-symmetric Gaussian coherent-state code that achieves the Holevo capacity of the bosonic channel.
\end{itemize}
We derive $L(\bar{n}_S)$ for both schemes in Appendix \ref{app:covcomm}.

Covert sensing performance is a function of Chernoff coefficient $\Phi(\bar{n}_S)$  
\cite{audenaert07QCB}. 
We derive these in Appendix \ref{app:covsensing}  and plot the results in Fig.~\ref{fig:plot_sensing_comp} for:
\begin{itemize}
    \item covertness-optimized single-photon probes, and
    \item quantum illumination using two-mode squeezed vacuum. 
\end{itemize}
Both Figs.~\ref{fig:plot_comms_comp} and \ref{fig:plot_sensing_comp} feature crossovers that illustrate the central tradeoff in sparse covert signaling: at small $\bar n_S$, schemes optimized for communication or sensing performance are superior, whereas at larger $\bar n_S$, the improvement in the QRE coefficient achieved by the covertness-optimal input dominates.

\section{Conclusion} \label{sec:conclusion} 
We characterized the optimal covert sparse signaling input state for lossy thermal-noise bosonic channels.
Surprisingly, it is a mixture of at most two consecutive photon-number states. We also examined the trade-offs between these states and those better suited for communication and sensing.
In the near future, we will evaluate additional schemes for covert target detection and extend our results to other sensing problems, e.g., estimating target transmittance $\gamma$ \cite{gong22losssensing}. 
We will also examine the implications of our work for the secret-key size requirement in covert communication. 

\section*{Acknowledgment}
The authors acknowledge helpful discussions with Saikat Guha, Mehrdad Tahmasbi,  Matthieu R.~Bloch, and Uzi Pereg.
This material is based upon work supported by the National Science Foundation under Grants No.~2045530 and 2107265.

\clearpage

\appendices

\section{Proof of Lemma~\ref{lemma:willie_output}}\label{ap:Willie_output}
The input state $\hat\rho_A=\sum_{i=0}^{r}\rho_i\ket{i}\bra{i}$
is invariant under phase rotations. Since the lossy thermal-noise bosonic
channel is phase covariant, its output at Willie is also invariant under
phase rotations. Thus Willie’s
output state is diagonal in the Fock basis. It remains to identify the diagonal probabilities $p_\ell$. For a
number-state input $\ket{i}\bra{i}$, the phase-space description of the thermal-loss channel yields a radial characteristic
function proportional to
$
    e^{-\left(\frac12+\eta\bar n_B\right)x}
    L_i\left((1-\eta)x\right),
    \quad x\geq 0,
$
where $L_i(\cdot)$ is the Laguerre polynomial of degree $i$. Using
the Fock-basis inversion formula for diagonal matrix elements \cite[Ch.~7]{guha04mastersthesis},
the conditional photon-number probability at Willie is
$
    p_{\ell|i}
    =
    \int_{0}^{\infty}
    e^{-(1+\eta\bar n_B)x}
    L_\ell(x)L_i\left((1-\eta)x\right)\dif x .
$
Since the input is the mixture $\sum_{i=0}^{r}\rho_i\ket{i}\bra{i}$,
linearity yields \eqref{eq:pl_integral_lemma}.
Finally, we expand both Laguerre polynomials via
$
    L_i(z)=\sum_{t=0}^{i}(-1)^t\binom{i}{t}\frac{z^t}{t!}.
$
and use
$
    \int_0^\infty e^{-a x}x^{t+u}\dif x
    =
    \frac{(t+u)!}{a^{t+u+1}}
$
with $a=1+\eta\bar n_B$ to obtain the double-summation expression for $p_\ell$ in Lemma \ref{lemma:willie_output}.

\section{Proof of Lemma~\ref{lemma:sparse_qre_expansion}}
\label{ap:proof_QRE_expansion}

Denote
\begin{align}
\hat{\rho}_W
=
\sum_{\ell=0}^{\infty}
p_{\alpha,\ell}\ket{\ell}\bra{\ell},
\end{align}
where,
\begin{align}
p_{\alpha,\ell}
=
(1-\alpha)\lambda_\ell+\alpha p_\ell
=
\lambda_\ell+\alpha(p_\ell-\lambda_\ell),    
\end{align}
$p_\ell$ is in \eqref{eq:pl_integral_lemma} and $\lambda_{\ell}= \frac{(\eta\bar n_\text{B})^\ell} {(1+\eta\bar n_\text{B})^{\ell+1}}$ is the $\ell^{\text{th}}$ diagonal element of $\hat{\rho}_\text{th}\left(\eta\bar n_\text{B}\right)$.
Also, define,
\begin{align}
h_\ell=p_\ell-\lambda_\ell,
~
r_\ell=\frac{h_\ell}{\lambda_\ell}.
\end{align}
Then
\begin{align}
p_{\alpha,\ell}
=
\lambda_\ell(1+\alpha r_\ell).
\end{align}
Since both \(\{\lambda_\ell\}\) and \(\{p_\ell\}\) are probability distributions,
\begin{align}    
\sum_{\ell=0}^{\infty}h_\ell
 =
\sum_{\ell=0}^{\infty}p_\ell
 -
\sum_{\ell=0}^{\infty}\lambda_\ell
=0.
\end{align}
Equivalently,
\begin{align}
\label{eq:lambda_r_zero}
\sum_{\ell=0}^{\infty}\lambda_\ell r_\ell=0.
\end{align}

Now, $ \hat\rho_{\text{th}}(\eta\bar n_B)$ and $\hat{\sigma}_W
=
\sum_{\ell=0}^{\infty}
 p_\ell \ket{\ell}\bra{\ell}$
are both diagonal in the Fock basis. Therefore, the QRE
reduces to the classical relative entropy, 
\begin{align}
D(\hat{\rho}_W\|\hat{\rho}_\text{th}\left(\eta\bar n_\text{B}\right))
&=
\sum_{\ell=0}^{\infty}
p_{\alpha,\ell}
\log\frac{p_{\alpha,\ell}}{\lambda_\ell}               \\
&=
\sum_{\ell=0}^{\infty}
\lambda_\ell(1+\alpha r_\ell)\log(1+\alpha r_\ell).    
\end{align}
Using \eqref{eq:lambda_r_zero}, we subtract the vanishing first-order term:
\begin{align}
D(\hat{\rho}_W\|\hat{\rho}_\text{th}\left(\eta\bar n_\text{B}\right))
&=
\sum_{\ell=0}^{\infty}
\lambda_\ell
\left[
(1+\alpha r_\ell)\log(1+\alpha r_\ell)
-\alpha r_\ell
\right].    \label{eq:cre}
\end{align}
We now justify differentiating this series term by term at $\alpha=0$. First, note that  $r_\ell$ grows at most polynomially,
whereas $\lambda_\ell$ decays geometrically in $\ell$. Hence
\begin{align}
\sum_{\ell=0}^{\infty}\lambda_\ell |r_\ell|<\infty,
\quad
\sum_{\ell=0}^{\infty}\lambda_\ell r_\ell^2<\infty.
\end{align}
Fix
$0<\alpha_0<1$. For $0\le\alpha\le\alpha_0$, since $p_\ell\ge0$, we have
\begin{align}
r_\ell=\frac{p_\ell-\lambda_\ell}{\lambda_\ell}\ge -1.    
\end{align}
Hence
$
1+\alpha r_\ell\ge 1-\alpha_0
$
when $r_\ell<0$, while
$    
1+\alpha r_\ell\ge 1
$
when $r_\ell\ge0$. Therefore,
\begin{align}    
0\le
\frac{\lambda_\ell r_\ell^2}{1+\alpha r_\ell}
\le
\frac{\lambda_\ell r_\ell^2}{1-\alpha_0}.
\end{align}
The right-hand side is summable. Therefore, dominated convergence theorem justifies exchanging the first and second derivatives with respect to $\alpha$ and the infinite sum in \eqref{eq:cre}:
\begin{align}
\left.
\frac{\dif}{\dif\alpha}
D\left(
\hat{\rho}_W
\middle\|
\hat\rho_{\mathrm{th}}(\eta\bar n_B)
\right)
\right|_{\alpha=0}
&=
\sum_{\ell=0}^{\infty}
(p_\ell-\lambda_\ell)
=
0,
\\
\left.
\frac{\dif^2}{\dif\alpha^2}
D\left(
\hat{\rho}_W
\middle\|
\hat\rho_{\mathrm{th}}(\eta\bar n_B)
\right)
\right|_{\alpha=0}
&=
\sum_{\ell=0}^{\infty}
\frac{(p_\ell-\lambda_\ell)^2}{\lambda_\ell}
=
C_W .\label{eq:C_W}
\end{align}
Thus, the Taylor-series expansion yields \eqref{eq:D_expansion} with $C_W$ in \eqref{eq:C_W}.

\section{Proof of Lemma~\ref{lemma:dia_perturb_coef}}
\label{ap:proof_orthogonal_C}

We work under the assumptions of Lemmas~\ref{lemma:willie_output} and \ref{lemma:sparse_qre_expansion}, and use Definition~\ref{def:meixner}. Thus, the input state's basis is truncated per \eqref{eq:truncated_input}
and Willie's output is diagonal,
\begin{align}
\hat{\sigma}_W=\sum_{\ell=0}^{\infty} p_\ell \ket{\ell}\bra{\ell}.
\end{align}
By Lemma~\ref{lemma:sparse_qre_expansion},
\begin{align}
C_W=\sum_{\ell=0}^{\infty}\frac{(p_\ell-\lambda_\ell)^2}{\lambda_\ell},
\label{eq:appC_start}
\end{align}
where
\begin{align}
\lambda_\ell=\frac{(\eta \bar n_B)^\ell}{(1+\eta \bar n_B)^{\ell+1}}.
\label{eq:app_lambda}
\end{align}

From Lemma~\ref{lemma:willie_output}, the output transition probabilities admit the integral
representation
\begin{align}
\mathcal{I}_{\ell,i}
=
\int_{0}^{\infty}
e^{-(1+\eta \bar n_B)x}L_\ell(x)L_i\left((1-\eta)x\right)\dif x,
\label{eq:app_Ili}
\end{align}
so that
\begin{align}
p_\ell=\sum_{i=0}^{r}\rho_i \mathcal{I}_{\ell,i},
\qquad
\lambda_\ell=\mathcal{I}_{\ell,0}.
\label{eq:app_p_lambda_kernel}
\end{align}
Using \(\sum_{i=0}^{r}\rho_i=1\), we rewrite
\begin{align}
p_\ell-\lambda_\ell
=
\sum_{i=1}^{r}\rho_i\left(\mathcal{I}_{\ell,i}-\mathcal{I}_{\ell,0}\right).
\label{eq:app_diff}
\end{align}
Substituting \eqref{eq:app_diff} into \eqref{eq:appC_start} yields
\begin{align}
C_W=
\sum_{\ell=0}^{\infty}
\frac{1}{\lambda_\ell}
\left(
\sum_{i=1}^{r}\rho_i\left(\mathcal{I}_{\ell,i}-\mathcal{I}_{\ell,0}\right)
\right)^2.
\label{eq:appC_quad}
\end{align}
Next, define
\begin{align}
R_i(\ell)=\frac{\mathcal{I}_{\ell,i}}{\lambda_\ell},
\qquad
S(\ell)=\frac{p_\ell}{\lambda_\ell}
=
\sum_{i=0}^{r}\rho_i R_i(\ell).
\label{eq:app_r_S}
\end{align}
Then \eqref{eq:appC_start} becomes
\begin{align}
C_W=\sum_{\ell=0}^{\infty}\lambda_\ell\left(S(\ell)-1\right)^2.
\label{eq:appC_S}
\end{align}

We now identify the orthogonal basis adapted to $\lambda_\ell$.
Let
\begin{align}
\zeta=\frac{\eta \bar n_B}{1+\eta \bar n_B},
\qquad
\theta=\frac{1-\eta}{1+\eta \bar n_B}.
\label{eq:app_q_theta}
\end{align}
Since
\begin{align}
\lambda_\ell=\frac{(\eta \bar n_B)^\ell}{(1+\eta \bar n_B)^{\ell+1}}
=(1-\zeta)\zeta^\ell,
\label{eq:app_lambda_geo}
\end{align}
the sequence $\{\lambda_\ell\}_{\ell\ge 0}$ is geometric. We, therefore, use Meixner
polynomials $M_y(\ell;\zeta)$, whose generating function is
\begin{align}
\sum_{y=0}^{\infty} M_y(\ell;\zeta) \nu^y
=
\left(1-\frac{\nu}{\zeta}\right)^\ell (1-\nu)^{-\ell-1},
\qquad |\nu|<1,
\label{eq:app_meixner_gf}
\end{align}
and whose orthogonality relation under $\lambda_\ell$ is
\begin{align}
\sum_{\ell=0}^{\infty}\lambda_\ell M_y(\ell;\zeta)M_{y'}(\ell;\zeta)
=
\zeta^{-y}\delta_{y,y'}.
\label{eq:app_meixner_orth}
\end{align}

To expand \(R_i(\ell)\) in this basis, consider the generating function
\begin{align}
G_\ell(s)=\sum_{i=0}^{\infty}\mathcal{I}_{\ell,i}s^i.
\label{eq:app_G}
\end{align}
Using the Laguerre polynomial generating function
\begin{align}
\sum_{i=0}^{\infty}L_i\left((1-\eta)x\right)s^i
=
\frac{1}{1-s}
\exp\left(
-\frac{(1-\eta)xs}{1-s}
\right),
\label{eq:app_laguerre_gf}
\end{align}
together with \eqref{eq:app_Ili}, we obtain
\begin{align}
G_\ell(s)
=
\frac{1}{1-s}
\int_{0}^{\infty}
\exp\left(
-\left[1+\eta \bar n_B+\frac{(1-\eta)s}{1-s}\right]x
\right)
L_\ell(x)\dif x.
\label{eq:app_G2}
\end{align}
Applying the Laplace-transform identity for Laguerre polynomials,
\begin{align}
\int_{0}^{\infty}e^{-ax}L_\ell(x)dx=\frac{(a-1)^\ell}{a^{\ell+1}},
\qquad a>1,
\label{eq:app_laguerre_laplace}
\end{align}
yields
\begin{align}
\sum_{i=0}^{\infty} R_i(\ell)s^i&=
\frac{G_\ell(s)}{\lambda_\ell}\\
&=
\left(1+(\theta/\zeta-1)s\right)^\ell
\left(1+(\theta-1)s\right)^{-\ell-1}.
\label{eq:app_ri_gf}
\end{align}
On the other hand, using \eqref{eq:app_meixner_gf} with $\nu=-\frac{\theta~s}{1-s}$, together with the binomial-series identity
\begin{align}
\sum_{i=y}^{\infty}\binom{i}{y}s^i=\frac{s^y}{(1-s)^{y+1}},
\qquad y=0,1,2,\dots,
\label{eq:app_binomial_series}
\end{align}
we obtain
\begin{align}
&\sum_{i=0}^{\infty}
\left(
\sum_{y=0}^{i}\binom{i}{y}(-\theta)^y M_y(\ell;\zeta)
\right)s^i \notag\\
&= 
\left(1+(\theta/\zeta-1)s\right)^\ell
\left(1+(\theta-1)s\right)^{-\ell-1}.
\label{eq:app_compare_gf}
\end{align}
Comparing \eqref{eq:app_ri_gf} and \eqref{eq:app_compare_gf} shows that, for every \(i\ge 0\),
\begin{align}
R_i(\ell)=\sum_{y=0}^{i}\binom{i}{y}(-\theta)^y M_y(\ell;\zeta).
\label{eq:app_ri_expansion}
\end{align}
Since \(R_0(\ell)=1\), it follows from \eqref{eq:app_r_S} and \eqref{eq:app_ri_expansion} that
\begin{align}
S(\ell)-1
&=
\sum_{i=1}^{r}\rho_i\left(R_i(\ell)-1\right) \nonumber\\
&=
\sum_{i=1}^{r}\rho_i
\sum_{y=1}^{i}\binom{i}{y}(-\theta)^y M_y(\ell;\zeta) \nonumber\\
&=
\sum_{y=1}^{r}
\left[
(-\theta)^y
\sum_{i=y}^{j}\rho_i\binom{i}{y}
\right]
M_y(\ell;\zeta).
\label{eq:app_Sminus1}
\end{align}
Define
\begin{align}
\mu_y=\sum_{i=y}^{r}\rho_i\binom{i}{y},
\qquad y=1,2,\dots,r.
\label{eq:app_mu}
\end{align}
Then \eqref{eq:app_Sminus1} becomes
\begin{align}
S(\ell)-1=\sum_{y=1}^{r} b_y M_y(\ell;\zeta),
\quad
b_y=(-\theta)^y\mu_y.
\label{eq:app_by}
\end{align}
Finally, substituting \eqref{eq:app_by} into \eqref{eq:appC_S} gives
\begin{align}
C_W
&=
\sum_{\ell=0}^{\infty}\lambda_\ell
\left(
\sum_{y=1}^{r}b_y M_y(\ell;\zeta)
\right)^2 \nonumber\\
&=
\sum_{y=1}^{j}\sum_{y'=1}^{r}
b_y b_{y'}
\sum_{\ell=0}^{\infty}\lambda_\ell M_y(\ell;\zeta)M_{y'}(\ell;\zeta).
\label{eq:appC_expand}
\end{align}
Applying the orthogonality relation \eqref{eq:app_meixner_orth}, all cross terms vanish and we obtain
\begin{align}
C_W=\sum_{y=1}^{r} \zeta^{-y} b_y^2
=
\sum_{y=1}^{r} \zeta^{-y}\theta^{2y}\mu_y^2,
\quad
\mu_y=\sum_{i= y}^r\rho_i\binom{i}{y}.
\label{eq:appC_final}
\end{align}
which is precisely the claimed orthogonal representation. 

\section{Proof of Lemma~\ref{lemma:two_p_lb}}
\label{ap:two_p_lb}
Let
$
\mathbb{E}[\mathrm{I}]=\kappa+\beta
$,
where
$
\kappa=\lfloor \mathbb{E}[\mathrm{I}]\rfloor
$
and
$
\beta\in[0,1)
$.
Define
\begin{align}
f=\varphi(\tau+1)-\varphi(\tau).
\end{align}
Since $\varphi$ is discrete convex, the increments
$
\varphi(i+1)-\varphi(i)
$
are nondecreasing in $i$. We claim that
\begin{align}
\varphi(i)\ge \varphi(\kappa)+f(i-\kappa),
\qquad i\in \mathbb{Z}_{\ge 0}.
\label{eq:supporting_line_discrete}
\end{align}

For $i\ge \kappa$,
\begin{align}
\varphi(i)-\varphi(\kappa)
=
\sum_{\kappa^\prime=\kappa}^{i-1}\left(\varphi(\kappa^\prime+1)-\varphi(\kappa^\prime)\right)
\ge
\sum_{\kappa^\prime=\kappa}^{i-1} f
=
f(i-\kappa),
\end{align}
because each increment in the sum is at least $f$. 

For $i\le \kappa$,
\begin{align}
\varphi(\kappa)-\varphi(i)
=
\sum_{\kappa^\prime=i}^{\kappa-1}\left(\varphi(\kappa^\prime+1)-\varphi(\kappa^\prime)\right)
\le
\sum_{\kappa^\prime=i}^{\kappa-1} f
=
f(\kappa-i),
\end{align}
because each increment in the sum is at most $f$. Rearranging again yield
\eqref{eq:supporting_line_discrete}. Thus \eqref{eq:supporting_line_discrete} holds for all
$i\ge 0$.

Taking expectation with respect to $\mathrm{I}$ yields
\begin{align}
\mathbb{E}[\varphi(\mathrm{I})]
\ge
\mathbb{E}\left[\varphi(\kappa)+f(\mathrm{I}-\kappa)\right]
&=
\varphi(\kappa)+f\left(\mathbb{E}[\mathrm{I}]-\kappa\right)\notag\\
&=
\varphi(\kappa)+f\beta.
\end{align}
Substituting \(f=\varphi(\kappa+1)-\varphi(\kappa)\), we obtain
\begin{align}
\mathbb{E}[\varphi(\mathrm{I})]
\ge
(1-\beta)\varphi(\kappa)+\beta \varphi(\kappa+1),
\end{align}
which proves the inequality.

Finally, if
$
\Pr[\mathrm{I}=\kappa]=1-\beta
$
and
$
\Pr[\mathrm{I}=\kappa+1]=\beta
$,
then \(\mathbb{E}[\mathrm{I}]=\kappa+\beta\) and
\begin{align}
\mathbb{E}[\varphi(\mathrm{I})]
=
(1-\beta)\varphi(\kappa)+\beta \varphi(\kappa+1),
\end{align}
which proves the equality and completes the proof.

\section{Covert Communication Performance Analysis}
\label{app:covcomm}

\subsection{On-Off Keying with a Single-Photon State}
Suppose Alice uses on-off keying, where $\hat{\rho}_A$ from Section \ref{subsec:low_brightness_sparse_optimum} and vacuum $\ket{0}\bra{0}$ modulate the ``on'' and ``off'' symbols, respectively. 
Alice and Bob use a blocklength-$n$ code such that $p(\text{``on''})=\alpha$ and  $p(\text{``off''})=1-\alpha$, where setting $\alpha=\sqrt{\frac{\delta_{\text{QRE}}}{2nC_W}}$ with $C_W$ from Section \ref{subsec:low_brightness_sparse_optimum} maintains the covertness constraint in \eqref{eq:cov_req_comms}.
Furthermore, suppose Bob employs a photon-number-resolving (PNR) measurement independently on each of his $n$ output modes. This induces a classical discrete memoryless channel between Alice and Bob, with transition probabilities 
\begin{align}
\label{eq:transi_prob_0}
p\left(\ell\middle|\text{``off''}\right)&=\frac{\left( \left(1-\eta\right) \bar n_B \right)^\ell}{(1+ (1-\eta) \bar n_B)^{\ell+1}}\\
p\left(\ell\middle|\text{``on''}\right)&=
(1-\bar n_S)
\frac{\big((1-\eta)\bar n_B\big)^\ell}
{\big(1+(1-\eta)\bar n_B\big)^{\ell+1}}
\nonumber\\
&\phantom{=}+\bar n_S\sum_{u=0}^{\ell}
(-1)^{u}
\binom{\ell}{u}
\frac{1}{\left(1+(1-\eta)\bar n_B\right)^{u+1}}\nonumber\\
&\phantom{=}+\bar n_S
\sum_{u=0}^{\ell}
(-1)^{u+1}
\binom{\ell}{u} 
\frac{\eta(u+1)}{\left(1+(1-\eta)\bar n_B\right)^{u+2}},\label{eq:transi_prob_1} 
\end{align}
where $\ell\in\{0,1,2,\ldots\}$ is Bob's PNR measurement output. 
We have \eqref{eq:transi_prob_0} since
Bob's output is a thermal state with a geometric photon number distribution when Alice transmits vacuum.
Evaluating $p_\ell$ in Lemma \ref{lemma:willie_output} with the substitution $\eta\to1-\eta$ to obtain Bob's rather than Willie's photon count yields \eqref{eq:transi_prob_1}.
The covert capacity follows from \cite[Th.~2]{bloch15covert}:
\begin{align}    
L(\bar n_S)&=(1-\xi)
\sqrt{\frac{2\delta_{\text{QRE}}}{C_W}}D\left(p\left(\ell\middle|\text{``on''}\right)\middle\|p\left(\ell\middle|\text{``off''}\right)\right),
\end{align}
where $0<\xi<1$ is an auxiliary constant in the proof of \cite[Th.~2]{bloch15covert} which we set to unity for simplicity of exposition, and $D\left(p(x)\middle\|q(x)\right)=\sum_x p(x)\log\frac{p(x)}{q(x)}$ is the classical relative entropy with summation over the domain of density $p(x)$.

\subsection{Holevo-Capacity-Achieving Modulation}
\label{app:holevocapmod}
Now suppose that Alice and Bob use a code that achieves the Holevo capacity on the subset of modes that are selected for transmission in Section \ref{subsec: Signaling Model}. 
Thus, Alice transmits a coherent state whose amplitude is drawn from a zero-mean circularly-symmetric complex Gaussian distribution with mean photon number $\bar n_S$, i.e., $\hat{\rho}_A=\ket{\epsilon}\bra{\epsilon}$, where
$\epsilon \sim \mathcal{CN}(0,\bar n_S)$ \cite{giovannetti_ultimate_2014}.
The Holevo capacity per transmitted mode is then
\begin{align}
\chi_\text{Hol}(\bar n_S)
=
g\left((1-\eta)\bar n_B+\eta\bar n_S\right)
-
g\left((1-\eta)\bar n_B\right),
\label{eq:chi2b_def}
\end{align}
where $g(x)=(x+1)\log(x+1)-x\log x$.

If Alice and Bob keep their code secret from Willie, then, in each mode, he receives a mixture of thermal states $\hat{\rho}_W=(1-\alpha)\hat{\rho}_{\text{th}}\left(\eta\bar{n}_B\right)+\alpha\hat{\rho}_{\text{th}}\left(\eta\bar{n}_B+(1-\eta)\bar{n}_S\right)$.
This is because averaging over Gaussian-modulated states yields a thermal state: $\hat{\rho}_{\text{th}}\left(\bar{n}_S\right)=\frac{1}{\pi \bar n_S}\int_{\mathbb C}
e^{-|\epsilon|^2/\bar n_S}
\ket{\epsilon}\bra{\epsilon}\dif^2\epsilon$.
We use the diagonality of $\hat{\rho}_W$ in the Fock basis to expand the QRE $D\left(\hat{\rho}_W\middle\|\hat{\rho}_{\text{th}}\left(\eta\bar{n}_\text{B}\right)\right)
=
\frac{\alpha^2}{2} C_G+o(\alpha^2)$,
where 
\begin{align}
C_G&=
\frac{(1-\eta)^{2} \bar n_S^{2}}
{\eta \bar{n}_B \left(1+\eta \bar{n}_B\right)-(1-\eta)^{2} \bar n_S^{2}}.\label{eq:C_G}
\end{align}
Note that $C_G>C_W$.
The resulting covert capacity is
$L(\bar n_S)=\sqrt{\frac{2\delta_{\text{QRE}}}{C_{G}}}\chi_\text{Hol}(\bar n_S)$.

Finally, note that here $L(\bar{n}_S)$ is an expected value with respect to the random sequence of coin flips that determines the modes used for transmission. There is a small chance that Alice and Bob select a sequence with too few or (too many) modes.  We can avoid this by expurgating such bad sequences by applying the argument outlined in the construction of \cite[Lemma 3]{anderson25coventgenbosonic} and \cite[App.~VI]{anderson25coventgenbosonic}, which is, in turn, adapted from \cite{tahmasbi21signalingcovert}.
Indeed, the original expurgation argument in \cite{tahmasbi21signalingcovert} addresses the covert sensing setting.
Note that this technique also applies to the analysis in the following appendix.

\section{Covert Sensing Performance Analysis}
\label{app:covsensing}
\subsection{Covertness-Optimized Single-Photon Probes}
\label{app:sensignsinglephoton}
Suppose Alice uses $\hat{\rho}_S=\hat{\rho}_A$ from Section \ref{subsec:low_brightness_sparse_optimum} to probe the target, without using reference modes $I$.
Substitutions $\tau_W\to1-\eta$,  $\gamma^{(h)}\bar n_S\to\bar n_S$ and $\bar n_W\to\bar{n}_B$ in $C_W$ from that section yield the expansion coefficient in \eqref{eq:D_expansion}, $C_{W_h}=\frac{\left(\gamma^{(h)}\tau_W\bar n_S\right)^2}
{(1-\tau_W)\bar n_W\left[1+(1-\tau_W)\bar n_W\right]}$, 
with $h\in\{0,1\}$ indicating whether the target is absent or present.  
Setting $\alpha=\min_{h\in\{0,1\}}\sqrt{\frac{\delta_{\text{QRE}}}{2nC_{W_h}}}$  maintains the covertness constraint in \eqref{eq:cov_req_sensing}.
We assume that Alice uses a PNR measurement on her returned probe.
The likelihood of the target-presence hypothesis $h$ is a function of the detected photon number $\ell$:
\begin{align}
p_\ell(h)&=(1-\bar n_S)
\frac{\left((1-\tau_A)\bar n_A\right)^\ell}
{\left(1+(1-\tau_A)\bar n_A\right)^{\ell+1}}
\nonumber\\
&\phantom{=}+\bar n_S\sum_{u=0}^{\ell}(-1)^u\binom{\ell}{u}\frac{1}{\left(1+(1-\tau_A)\bar n_A\right)^{u+1}}\nonumber\\
&\phantom{=}+\bar n_S\sum_{u=0}^{\ell}(-1)^{u+1}
\binom{\ell}{u}\frac{(1-\gamma^{(h)})\tau_A(u+1)}
{\left(1+(1-\tau_A)\bar n_A\right)^{u+2}},
\end{align}
We derive it by substituting $(1-\tau_A)\bar n_A\to (1-\eta)\bar n_B$ and $(1-\gamma^{(h)})\tau_A\to\eta$ in \eqref{eq:transi_prob_1}.
An upper bound for Alice's detection error probability is:
 \begin{align}
 P_e^{\text{(s)}}&\leq \frac{1}{2}\exp\left[-n\alpha \Phi(\bar{n}_S)\right],\label{eq:chernoff_bound}
 \end{align}
where $\Phi(\bar{n}_S)=-\log\min_{0\le s\le 1}\sum_{\ell=0}^{\infty}\left(p_{\ell}\left(0\right)\right)^s\left(p_{\ell}\left(1\right)\right)^{1-s}$ is the classical Chernoff exponent \cite[Ch.~2.4]{vantrees01part1}. 
Thus, $\lim_{n\to\infty}-\frac{1}{\sqrt{n}}\log P_e^{\text{(s)}}=\min_{h\in\{0,1\}}\sqrt{\frac{\delta_{\text{QRE}}}{2C_{W_h}}}\Phi(\bar{n}_S)$ quantifies the covert sensing capability that we plot in Fig.~\ref{fig:plot_sensing_comp}.
We note that, although here the probability of error $P_e^{\text{(s)}}$ is, in fact, an expected value with respect to the random sequence of coin flips that determines the modes used for probing.
However, just as in Appendix \ref{app:holevocapmod}, we can expurgate the bad sequences per \cite{tahmasbi21signalingcovert}, \cite[Lemma 3 and App.~VI]{anderson25coventgenbosonic}.

\subsection{Quantum Illumination Using Two-Mode Squeezed Vacuum}
Now suppose Alice uses quantum illumination with two-mode squeezed vacuum (TMSV) states \cite{tan08qigaussianstates} $\hat{\rho}_{IS}=\ket{\psi}\bra{\psi}_{IS}$, where $\ket{\psi}_{IS}=\sum_{i=0}^\infty\sqrt{\frac{(\bar{n}_S)^i}{(1+\bar{n}_S)^{i+1}}}\ket{i}_I\ket{i}_S$. We consider TMSV state because they optimize the quantum Chernoff exponent among Gaussian inputs, are near-optimal in the low-reflectance regime \cite{bradshaw2021optimal}, and optimize the missed-detection/false-alarm tradeoff among all input states \cite{depalma18mepQI}.
Since Willie cannot access reference modes $I$, when Alice transmits, he receives a thermal state $\hat{\rho}_{\text{th}}\left((1-\tau_W)\bar{n}_W+\gamma^{(h)}\tau_W\bar{n}_S\right)$.
Adapting the approach from Appendix \ref{app:holevocapmod}, we obtain $C_G$ as in \eqref{eq:C_G} with substitutions $\tau_W\to1-\eta$ and $\bar{n}_W\to\bar{n}_B$.
We eliminate $\gamma^{(h)}$ by noting that, since it multiplies $\bar{n}_S$, taking $h=0$ maximizing $C_G$.
Furthermore, \eqref{eq:chernoff_bound} holds with quantum Chernoff exponent $\Phi(\bar n_S)=-\log\min_{0\le s\le 1}\tr\left[\left(\hat{\rho}_{IR}(0)\right)^s\left(\hat{\rho}_{IR}(1)\right)^{1-s}\right]$, where $\hat{\rho}_{IR}(h)$ is the joint reference and probe return state when target is either absent ($h=0$) or present ($h=1$).
As in Appendix \ref{app:sensignsinglephoton}, we quantify the covert sensing capacity in Fig.~\ref{fig:plot_sensing_comp} by $\lim_{n\to\infty}-\frac{1}{\sqrt{n}}\log P_e^{\text{(s)}}=\sqrt{\frac{\delta_{\text{QRE}}}{2C_{G}}}\Phi(\bar{n}_S)$.


\end{document}